\newtheorem{theorem}{Theorem}
\newtheorem{definition}{Definition}
\newtheorem{lemma}{Lemma}
\newtheorem{assumption}{Assumption}
\newtheorem{problem}{Problem}
\newcommand{\red}{\color{black}} %
\begin{document}


\title{A Polynomial Time Quantum Algorithm for Exponentially Large Scale Nonlinear Differential Equations via Hamiltonian Simulation}%

\author{Yu Tanaka}
\email{Yu.Tanaka@sony.com}
\affiliation{
Advanced Research Laboratory, 
Research Platform,
Sony Group Corporation, 1-7-1 Konan, Minato-ku, Tokyo, 108-0075, Japan
}

\author{Keisuke Fujii}
\affiliation{
Graduate School of Engineering Science, Osaka University,
1-3 Machikaneyama, Toyonaka, Osaka 560-8531, Japan
}

\date{\today}

\begin{abstract}
Quantum computers have the potential to efficiently solve a system of nonlinear ordinary differential equations (ODEs), which play a crucial role in various industries and scientific fields. 
However, it remains unclear which system of nonlinear ODEs, and under what assumptions, can achieve exponential speedup using quantum computers.
In this work, we introduce a class of systems of nonlinear ODEs that can be efficiently solved on quantum computers,
where the efficiency is defined as solving the system with computational complexity of $O(T {\rm log}(N) {\rm polylog}(1/\epsilon))$, where $T$ is the evolution time, $\epsilon$ is the allowed error, and $N$ is the number of variables in the system. 
Specifically, we employ the Koopman-von Neumann linearization to map the system of nonlinear ODEs to Hamiltonian dynamics and find conditions where the norm of the mapped Hamiltonian is preserved and the Hamiltonian is sparse.
This allows us to use the optimal Hamiltonian simulation technique for solving the nonlinear ODEs with $O({\rm log}(N))$ overhead.
Furthermore, we show that the nonlinear ODEs include a wide range of systems of nonlinear ODEs, such as the nonlinear harmonic oscillators and the short-range Kuramoto model. 
This is the first concrete example of solving systems of nonlinear ODEs with exponential quantum speedup by the Koopman-von Neumann linearization, although it is noted that this assumes efficient preparation of the initial state and computation of the output.
These findings contribute significantly to the application of quantum computers in solving nonlinear problems. 
\end{abstract}

\maketitle


\section{Introduction} \label{Sec:1}

Nonlinear differential equations play a very important role in industry as well as in scientific fields such as physics and mathematics. 
If quantum computers can be used to analyze such nonlinear differential equations and accelerate the computation, the applications of quantum computers will expand greatly. 
Besides, the relationship between nonlinear differential equations and machine learning is also attracting attention recently~\cite{chen2018neural}. This could be a new route for the application of quantum computers to the field of machine learning. 
However, since quantum systems are essentially linear, and hence nontrivial treatment is required to apply quantum computers for solving nonlinear differential equations.

Attempts to linearize nonlinear differential equations have long been studied. 
The well-known ones are the Carleman linearization~\cite{carleman1932application} and the Koopman-von Neumann linearization~\cite{koopman1931hamiltonian}. 
These two linearizations are also closely related to quantum mechanics, and Ref.~\cite{Kowalski1997-ju} discusses how to embed and linearize nonlinear differential equations into quantum mechanical systems in a unified framework.
In this context, the Carleman linearization maps 
variables of the system including their higher order nonlinear terms 
directly to the complex probability amplitudes of a quantum system. 

Quantum algorithms for solving nonlinear differential equations using this mapping have been proposed~\cite{Liu2021-wq}. 
In Ref.~\cite{Liu2021-wq}, the authors develop the quantum Carleman linearization, 
a quantum algorithm for dissipative quadratic ordinary differential equations.
That algorithm has complexity $O(T^2 q {\rm poly}(\log T, \log N, \log 1/\epsilon)/\epsilon)$, 
where $T$ is the evolution time, $\epsilon$ is the allowed error, and $q$ measures decay of the solution.
In the quantum Carleman linearization, two infinite dimensional linearizations, the Carleman linearization and the forward Euler method, are applied, and the linearized problem can be solved by the high-precision quantum linear system algorithm (QLSA)~\cite{harrow2009quantum}.
In applying QLSA, the sparsity and the condition number of the matrix are important parameters. The quantum Carleman linearization only inherits the number of terms in the equation as the sparsity.
And, two cutoff parameters introduced for those linearizations as accuracy parameters give the bound of the condition number.
Further, the complexity depends on the decay $q$.
This is inevitable because the linearized differential equations are not Hamiltonian dynamics by applying the Carleman linearization.

On the other hand, the Koopman-von Neumann linearization embeds classical variables via a position state in a continuous variable quantum system.
For its discretization, we can employ a number state and its corresponding orthogonal polynomials.
The advantage of the latter approach is that such an embedding maps the nonlinear differential equations to Hamiltonian dynamics. 
While an infinite-dimensional Hilbert space must be truncated by regulating the total particle number, Ref.~\cite{Engel2021-vl} showed that the truncation error is $O(\eta^m)$, where $\eta$ is  a dimensionless parameter characterizing the strength of the nonlinearity and $m$ is the total particle number. 
Furthermore, the Hamiltonian simulation~\cite{Low2019hamiltonian} was shown to be applied by giving a block encoding, a technique of encoding a matrix as a block of a unitary~\cite{10.1145/3313276.3316366}, of the Hamiltonian into which the nonlinear system was embedded.
Then, the time complexity was shown to be characterized by the max norm of the matrix embedded in the block encoding, which finally depends on the nonlinear equations.

However, the computational complexity is not fully understood as follows. 
One problem is that it has been unclear whether or not the mapped Hamiltonian is sparse and whether or not the norm of the Hamiltonian dynamics is conserved.
Another and more serious problem is that the nonlinearity $\eta$ was assumed to be sufficiently small, which implicitly depends on the max norm of the mapped Hamiltonian~\footnote{Because of this, the previous version of this paper imposed a condition $\eta T \sim 1$.}.
Therefore it is still unclear for what kind of nonlinear differential equations can be solved using a quantum computer with what computational complexity
in terms of the problem parameters, system size $N$, accuracy $\epsilon$, and simulation time $T$.
This motivates to evaluate the computational complexity of solving nonlinear differential equation based on the Koopman-von Neumann linearization more in detail.

To address these difficulties, we present a class of systems of nonlinear ordinary differential equations (ODEs) that can be reducible to the efficient Hamiltonian simulation. (See Fig.~\ref{fig:example_nlde}.)
Here, solving a system of ODEs means deriving a quantity described by a multivariate polynomial of finite degree of the variables in the system that have evolved for a finite time.
The efficiency means that the time evolution of the system of ODEs can be simulated with a computational complexity of $T {\rm log}(N) {\rm polylog}(1/\epsilon)$, where $T$ is the evolution time, $\epsilon$ is the allowed error, and $N$ is the number of variables in the proposed ODEs.
\begin{figure}[t]
    \centering
    \includegraphics[keepaspectratio, scale=0.15]{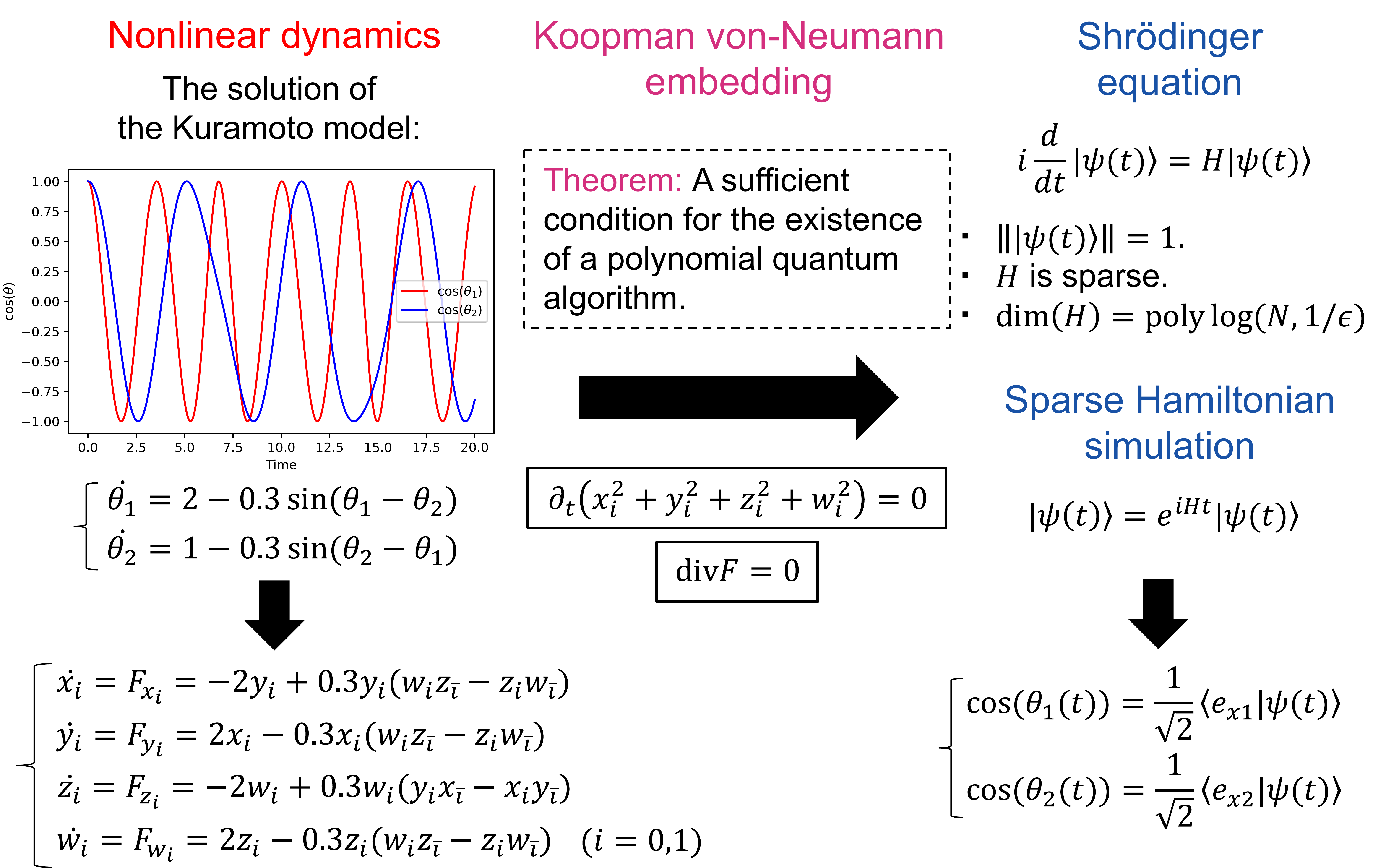}
    \caption{\raggedright An example of embedding the Kuramoto model into Hamiltonian dynamics. First, the Kuramoto model is rewritten into nonlinear differential equations satisfying two conditions. One is ${\rm div}F = 0$, which is required to prevent the norm decay of the quantum state into which the solution $(x_t, y_t, z_t)$ embedded. The other is that $\partial_t(x_i^2+y_i^2+z_i^2+w_i^2) = 0$, which requires the dynamical system to be a conservative system. (Note that some non-conservative systems can be rewritten as the nonlinear ODEs reducible to the efficient Hamiltonian simulation by introducing dummy variables in Sec.~\ref{Sec:6}.)} \label{fig:example_nlde}
\end{figure}

There are two major challenges to solving this problem: first, the norm is not conserved when mapped to Hamiltonian dynamics. 
The second and the most important one is that the mapped Hamiltonian is not sparse, making it impossible to efficiently apply the Hamiltonian simulation. 
We show that the proposed ODEs can resolve these two conditions, and they can be solved exponentially efficiently for the number of $N$ variables.

Furthermore, we rigorously evaluate the truncation error in a similar way to Ref.~\cite{Mizuta2023optimalhamiltonian}, which guarantees that the number of qubits required to achieve an accuracy $\epsilon$ scales $\mathrm{poly} \log (1/\epsilon)$.
This compliments the argument given in Ref.~\cite{Engel2021-vl} and allows us to argue the computational complexity more rigorously.

The proposed ODEs are embedded into Hamiltonian dynamics and applied to the Hamiltonian simulation~\cite{10.1145/3313276.3316366}, thus they depends on neither condition number nor the decay of the solution. 
While the reducible ODEs are a restricted class of systems of ODEs, we claim the broad applications by showing that the proposed ODEs include a wide range of systems of nonlinear ODEs, e.g., classical linear and nonlinear harmonic oscillators, and the short-range Kuramoto model~\cite{RevModPhys.77.137}. 
The short-range Kuramoto model considers locally coupled oscillators, where coupling strength decays with distance, while the classical Kuramoto model describes phase synchronization in globally coupled ones. 

{\red
We identified the mathematical conditions that allow nonlinear ODEs to be tackled on a quantum computer via the Koopman-von Neumann linearization method, without relying on the Carleman linearization. 
Based on these conditions, we were able to find the two specific examples mentioned above, although they are toy models. 
It is expected that this method can be extended to a wider range of practical nonlinear differential equations in the future.
Furthermore, while we focused on Hermite polynomials, note that the same approach applies to other orthogonal polynomials.
}

The rest of this paper is organized as follows.
In Sec.~\ref{Sec:2}, we briefly review an existing linearization technique of nonlinear ODEs into a quantum system using the position operator embedding.
In Sec.~\ref{Sec:3}, 
we introduce a class of nonlinear ODEs and show that the proposed ODEs are embedded into Hamiltonian dynamics. 
We evaluate the upper bounds of the max norm and the spectral norm of the Hamiltonian associated with the proposed ODEs.
Then, we prove the approximation theorem of the cut-offed Hamiltonian dynamics.
In Sec.~\ref{Sec:4}, we introduce an approximate representation to reduce the spatial complexity.
In Sec.~\ref{Sec:5}, we show the algorithm of simulating the proposed ODEs by assuming the sparse access model and the quantum singular value transformation~\cite{10.1145/3313276.3316366}.
In Sec.~\ref{Sec:6}, as some applications, we show that the proposed ODEs include classical linear and nonlinear harmonic oscillators, and the short-range Kuramoto model~\cite{RevModPhys.77.137}.
Section~\ref{Sec:7} is devoted to a conclusion.

\section{Preliminary: Reduction to the evolution equation in Hilbert space}
\label{Sec:2}
In this section, we follow Ref.~\cite{Kowalski1997-ju} and introduce a method for embedding a nonlinear system into an infinite dimensional linear system.
For simplicity, let us consider one-dimensional case
\begin{align}
    \frac{dx}{dt} = F(x), \label{1d_NL_system}
\end{align}
where $F(x)$ is an analytic function.
To map Eq.~(\ref{1d_NL_system}) to a linear system, first, we introduce a hermitian operator $\hat{x}$ with the complete set of eigenvectors such that
\begin{eqnarray}
\hat{x} \ket{x} = x \ket{x}, \label{eq1}
\end{eqnarray}
and 
\begin{eqnarray}
\braket{n|x} = w(x)^{1/2} p_n(x), n \in \mathbb{Z}_{\ge 0} \label{eq2}
\end{eqnarray}
where $\{ \ket{n} \ | \ n \in \mathbb{Z}_{\ge 0} \}$ is the basis set in the occupation number representation and $p_n(x)$ is a normalized classical orthogonal polynomial defined by the trinomial recurrence formula
\begin{eqnarray}
p_{n+1}(x) = (A_n + B_n x)p_n(x) - C_n p_{n-1}, n \in \mathbb{Z}_{\ge 0},
\end{eqnarray}
where $p_{-1}(x) = 0$.
Using creation and annihilation operators $\hat{a}^{\dagger}$ and $\hat{a}$, \textit{i.e.},
\begin{eqnarray}
\hat{a}^{\dagger} \ket{n} = \sqrt{n+1} \ket{n+1}, \hat{a} \ket{n} = \sqrt{n} \ket{n-1},
\end{eqnarray}
we have the representation of $\hat{x} = x(\hat{a}^\dagger, \hat{a})$ as the operator function of $\hat{a}^\dagger$ and $\hat{a}$. 
Second, introducing a hermitian operator $\hat{k} = k(\hat{a}^{\dagger},\hat{a})$ satisfying the commutation relation $[\hat{x}, \hat{k}] = i X(\hat{x})$, where $X$ is determined by the choice of the classical orthogonal polynomials,
the following vector 
\begin{eqnarray}
\ket{\tilde{x}(t)} = \exp \Bigl[ \frac{1}{2} \int_0^t F'(x) d\tau \Bigr] \ket{x(t)} \label{vec}
\end{eqnarray}
satisfies the Shr\"{o}dinger equation
\begin{eqnarray}
i \frac{d}{dt} \ket{\tilde{x}(t)} = \hat{H} \ket{\tilde{x}(t)}, \label{shr}
\end{eqnarray}
where the Hamiltonian $\hat{H}$ is written as
\begin{eqnarray}
\hat{H} = \frac{1}{2} \Bigl( \hat{k} \frac{F(\hat{x})}{X(\hat{x})} + \frac{F(\hat{x})}{X(\hat{x})} \hat{k} \Bigr) . \label{H}
\end{eqnarray}
Specifically, if we chose Hermitian polynomial, 
$\hat{a}^{\dag},\hat{a}$ are the creation and annihilation operators of the harmonic oscillator, and $\hat{k}$ is the momentum operator, which leads $X=1$.

To generalize the above result for a multidimensional case, we focus on an initial value problem described by the following $N$-dimensional ODE
\begin{eqnarray}
\frac{dx_i}{dt} &=& F_i(\mathbf{x}),\ i \in [N], \label{NL_system} 
\end{eqnarray}
where $F_i(\mathbf{x})$ is a multivariate analytic function and $x_i (t)$ is a function of $t$ on the interval $[0, T]$.
And let us introduce hermitian operators $\hat{x}_i$ 
with the complete set of eigenvectors $\ket{x_i}$.
Then, for the nonlinear dynamical system,
defining the Hamiltonian $\hat{H}$ as
\begin{eqnarray}
\hat{H} = \frac{1}{2} \sum_{i=1}^N \Bigl( \hat{k}_i \frac{F_i(\hat{\textbf{x}})}{X(\hat{x}_i)} + \frac{F_i(\hat{\textbf{x}})}{X(\hat{x}_i)} \hat{k}_i \Bigr), \label{eq:MH}
\end{eqnarray}
the Shr\"{o}dinger equation (\ref{shr}) holds for the following tensor product of states
\begin{eqnarray}
\ket{\tilde{\textbf{x}}(t)} 
&=&
\exp \Bigl[ \frac{1}{2} \int_0^t \mathrm{div} F(\mathbf{x}) d\tau \Bigr]
\ket{\textbf{x}(t)}, \label{sol}
\end{eqnarray}
where $\ket{\textbf{x}(t)} := \bigotimes_{i=1}^N \ket{x_i(t)}$.

{\red
The state $\ket{\tilde{\textbf{x}}(t)}$ in Eq.~(\ref{sol}) does not preserve the norm in general. 
Thus, to avoid the global decaying term, 
we assume that 
\begin{eqnarray}
\mathrm{div} F(\mathbf{x}) = \sum_{i=1}^N \frac{\partial F_i(\mathbf{x})}{\partial x_i} = 0. \label{condition1}
\end{eqnarray}
Furthermore, expanding Eq.~(\ref{sol}) in the occupation number representation, we have
\begin{eqnarray}
\ket{\textbf{x}(t)} = \bigotimes_{i=1}^N \left( w(x_i)^{\frac{1}{2}} \sum_{n_i=0}^{\infty} p_{n_i}(x_i) \ket{n_i} \right). \label{Expand}
\end{eqnarray}
The factor $w(\mathbf{x}) := \prod_{i=1}^N w(x_i(t))$ in Eq.~(\ref{Expand}) can complicate the representation of desired outputs. 
Actually, since the output quantity can be obtained as an inner product
\begin{eqnarray}
    \braket{c \mid \mathbf{x}(t)} = w(\mathbf{x}(t))^{\frac{1}{2}} \sum_{\mathbf{n}: |\mathbf{n}| \le b} c_{\mathbf{n}} \prod_{i=0}^{N} p_{n_i}(x_i), \label{eq:output}
\end{eqnarray}
where $\ket{c} = \sum_{\mathbf{n}: |\mathbf{n}| \le b} c_{\mathbf{n}} \ket{\mathbf{n}}$ is an arbitrary state in the Hilbert space restricted to the subspace bounded by the total occupation number $b$, we need to remove $w(\textbf{x}(t))$ from the inner product. In general, the value $w(\textbf{x}(t))$ after time evolution is unknown and its measurement cost is also non trivial.
Thus, we assume that $w(\textbf{x}(0))$ is known and we require that
\begin{eqnarray}
\dot{w}(\mathbf{x}) = \sum_{i=1}^N \frac{\partial w(\mathbf{x})}{\partial x_i} F_i(\mathbf{x}) = 0. \label{condition2}
\end{eqnarray}

From the above considerations, we present our problem setting for the system of nonlinear differential equations.
\begin{problem} \label{problem_1}
    We consider a system of $N$-dimensional ODEs
    \begin{eqnarray}
        \frac{d x_i}{dt} = F_i (\mathbf{x}), \in [N],
    \end{eqnarray}
    where $x_i(t)$ is a function of $t$ on the interval $[0, T]$ and $F_i (\mathbf{x})$ is a multivariate analytic function.
    The system satisfies that ${\rm div} F(\mathbf{x}) = 0$ and $\dot{w}(\mathbf{x}) = 0$, where $w$ is the weight function of a classical orthogonal polynomial system. 
    Then, for a given initial value $\mathbf{x}(0)$, the problem is to approximate the output $p(\mathbf{x}(T))$ of a $b$-degree multivariate polynomial within the accuracy $\epsilon$.  
\end{problem}
}

{\red
On a classical computer, solving Problem~\ref{problem_1} requires ${\rm poly}(N)$ overhead in space and time. 
Our goal here is to solve such a problem in ${\rm polylog}(N)$ qubits and computation time on a quantum computer. 

We see that Problem~\ref{problem_1} can be divided into three parts.
The first part is the preparation of the initial state $\ket{\mathbf{x}(0)}$, which encodes the initial value $\mathbf{x}(0)$.
The second part is the efficient Hamiltonian simulation using the Hamiltonian given by Eq.~(\ref{eq:MH}). 
The third part is the evaluation of the output as the inner product $\braket{c \mid \mathbf{x}(T)}$ in Eq.~(\ref{eq:output}).

In this paper, we focus on the second problem and make the assumptions regarding the first and third ones.
The assumption for the initial state preparation is given as 
\begin{assumption} \label{assumption_1}
    Given an initial value $\mathbf{x}(0)$, we assume an oracle that prepares the initial state $\ket{\mathbf{x}(0)}$ in Problem~\ref{problem_1}. 
\end{assumption}
The reason for making Assumption~\ref{assumption_1} is that the distribution of initial values differs for each problem, and whether it is possible to efficiently encode these initial values into a quantum computer depends on whether the data structure of these initial values can be used.

On the other hand, the assumption for the evaluation of the output is given as
\begin{assumption} \label{assumption_2}
    We assume an oracle that prepares the quantum state $\ket{c}$, such that the inner product $\braket{c \mid \mathbf{x}(t)}$ represents the output multivariate polynomial in Problem~\ref{problem_1}.
\end{assumption}

To argue the efficient Hamiltonian simulation using the Hamiltonian in Eq.~(\ref{eq:MH}), we need to evaluate the Hamiltonian $\hat{H}_m$ on the Hilbert space $\mathcal{H}_m$ restricted to the subspace bounded by the total occupation number $m$, which is also the accuracy parameter in the algorithmic view point. 
Therefore, we focus on deriving a sufficiently large scale of the total occupation number $m$ to achieve the required output accuracy $\epsilon$ through the analysis of the simulation error defined in the following.

\begin{definition} \label{def:error}
    For a given Hamiltonian $\hat{H}$ on a Hilbert space $\mathcal{H}$ and an initial state $\ket{\psi} \in \mathcal{H}$, we denote $\hat{H}_m$ and $\ket{\psi_m}$ as the Hamiltonian on the subspace $\mathcal{H}_m$ bounded by the total occupation number $m$ and the quantum state in $\mathcal{H}_m$, respectively. Then, for any quantum state $\ket{c} \in \mathcal{H}_b$, we define the simulation error by
    \begin{eqnarray}
        \left| \bra{c} e^{-i \hat{H}t} \ket{\psi} - \bra{c} e^{-i \hat{H}_m t} \ket{\psi_m} \right|. \label{eq:simulation_error}
    \end{eqnarray}
\end{definition}

In the next section, we introduce a restricted class of ODEs that can be solved using the Koopman-von Neumann embedding. 
By evaluating the error Eq.~(\ref{eq:simulation_error}) defined in Eq.~(\ref{eq:simulation_error}), we derive the total occupation number $m$ required to achieve the desired accuracy $\epsilon$.

In the rest of the paper, for a given Hilbert space $\mathcal{H}$, a Hamiltonian $\hat{H}$ and a quantum state $\ket{\psi}$, we denote $\mathcal{H}_m$, $\hat{H}_m$ and $\ket{\psi_m}$ as the Hilbert space restricted to the subspace bounded by the total occupation number $m$, the truncated Hamiltonian on $\mathcal{H}_m$ and the truncated quantum state in $\mathcal{H}_m$, respectively. 
}

\section{Nonlinear ODEs reducible to Hamiltonian dynamics} \label{Sec:3}

We introduce one restricted ODE class solvable with the Koopman-von Neumann embedding, which means that the ODE class is reduced to the evolution equation through Hermite polynomial expansion and satisfies Eq.~(\ref{condition1}) and Eq.~(\ref{condition2}).

\begin{definition}
    \label{def:quantum_solvable_ODE}
    A system of $N$-dimensional ODEs in the interval $[0,T]$ solvable with the Koopman-von Neumann embedding is a system of nonlinear ODEs defined by a family $\mathcal{S}$ of index sets each of which determines the variables engaged in an interaction, and coupling constants $\alpha _{p \to i}$ of the interactions:
    \begin{eqnarray}
        \frac{d x_i}{dt} 
        = F_i(\mathbf{x}) = \sum_{\substack{p \in \mathcal{S} \\ : i \in p}} \alpha _{p \to i} \prod_{j \in p\setminus i} x_j, i \in [N], \label{eq:quantum_solvable_ODE}
    \end{eqnarray}
    where the index sets $p \in \mathcal{S}$, and the coupling constants $\alpha _{p \to i}$ satisfy the followings:
     \begin{itemize}
        \item[1] For any $p \in \mathcal{S}$, $2 \leq |p| \leq d$.
        \item[2] For any $i \in [N]$, there exists at least one and at most $c$ sets $p$ such as $i \in p$.
        \item[3] For any $p = (i_1,\cdots, i_l) \in \mathcal{S}$, real numbers $\alpha_{p \to i_1}, \cdots, \alpha _{p \to i_l} \in \mathbb{R}$ are assigned such that $\sum_{i \in p} \alpha_{p \to i} = 0$. 
    \end{itemize}
\end{definition}
Note that the other ODE class with terms in which the degree of each variable is two or higher can be reduced to Eq.~(\ref{eq:quantum_solvable_ODE}) by introducing dummy variables~\cite{Engel2021-vl} and considering multiple systems, each identical to the original, including in their initial conditions.
In Sec.~\ref{Sec:6}, we show the applicability of the proposed system of ODEs by reducing several systems to it.

The proposed system of ODEs satisfies Eq.~(\ref{condition1}) trivially, since the term of Eq.~(\ref{eq:quantum_solvable_ODE}) does not contain $x_i$.
It also satisfies Eq.~(\ref{condition2}) as follows. 
Considering that the weight function of the Hermite polynomial system is $w(x) = e^{-x^2}$, we can check Eq.~(\ref{condition2}) directly:
\begin{eqnarray}
        \dot{w}(\mathbf{x}) 
        &=&
        \sum_{i=1}^N \frac{\partial w(\mathbf{x})}{\partial x_i} F_i(\mathbf{x})
        \propto
        \sum_{i=1}^N \sum_{\substack{p \in \mathcal{S} \\ : i \in p}} \alpha_{p \to i} \prod_{j \in p} x_j \label{eq:sum_N} \\
        &=&
        \sum_{p \in \mathcal{S}} \left( \sum_{i \in p} \alpha_{p\to i} \right) \prod_{j \in p} x_j = 0, \label{eq:sum_S}
\end{eqnarray}
where Eq.~(\ref{eq:sum_S}) is derived as follows.
Focusing one $p \in \mathcal{S}$ in Eq.~(\ref{eq:sum_N}), 
we can see that the term
\begin{eqnarray}
    \alpha_{p\to i} x_i \prod_{j \in p\setminus i} x_j,  
\end{eqnarray}
appears only once at each $i \in p$ through the sum $\sum_{i=1}^N$. 
Putting these terms together for each $p \in \mathcal{S}$, we obtain Eq.~(\ref{eq:sum_S}).

Clearly, the Hamiltonian corresponding to the proposed ODEs is written by
\begin{eqnarray}
    \hat{H} 
    &=& 
    \sum_{i=1}^N \sum_{\substack{p \in \mathcal{S} \\ : i \in p}} \alpha_{p\to i} \hat{k}_i \prod_{j \in p\setminus i} \hat{x}_j \label{eq:map_H} \\
    &=& 
    \sum_{p \in \mathcal{S}} \left( \sum_{i \in p} \alpha_{p\to i} \hat{k}_i \prod_{j \in p\setminus i} \hat{x}_j \right), \label{eq:q_sol_H}
\end{eqnarray}
where $\hat{x} = (\hat{a} + \hat{a}^{\dagger})/\sqrt{2}$ and $\hat{k} = i (\hat{a}^{\dagger} - \hat{a})/\sqrt{2}$.
{\red

Then, as stated in Def.~\ref{def:error}, we analyze the error given by Eq.~(\ref{eq:simulation_error}) that occurs when the initial state is time evolved in the Hilbert space $\mathcal{H}_m$ restricted to the subspace bounded by the total occupation number $m$. 
First, we prove the following technical lemma.
\begin{lemma} \label{lem:1}
    For the Hamiltonian $\hat{H}$ given by Eq.~(\ref{eq:map_H}), which corresponds to a system of $N$-dimensional ODEs defined by Eq.~(\ref{eq:quantum_solvable_ODE}), we consider the Hamiltonian $\hat{H}_m$ on the Hilbert space restricted to the subspace bounded by the total occupation number $m$. 
    Then, the Hamiltonian $\hat{H}_m$ is $O(m)$-sparse, the max norm of $\hat{H}_m$ is $O(m^{d/2})$, and the spectral norm of $\hat{H}_m$ is $O(m^{d/2})$.
\end{lemma}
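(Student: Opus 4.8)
The plan is to bound the three quantities---sparsity, max norm, and spectral norm---by analyzing the single-mode operators $\hat{k}_i$ and $\hat{x}_j$ in the occupation number basis and then tracking how they combine in the sum defining $\hat{H}$ in Eq.~(\ref{eq:map_H}).

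First I would establish the matrix elements of the building blocks. Since $\hat{x} = (\hat{a}+\hat{a}^\dagger)/\sqrt{2}$ and $\hat{k} = i(\hat{a}^\dagger - \hat{a})/\sqrt{2}$, each of these single-mode operators connects $\ket{n}$ only to $\ket{n\pm 1}$, with matrix elements of magnitude $O(\sqrt{n})$; on the truncated space where the total occupation number is at most $m$, this is $O(\sqrt{m})$. A single term $\hat{k}_i \prod_{j \in p\setminus i} \hat{x}_j$ acts nontrivially on $|p| \le d$ modes, so it is a tensor product of $|p|$ raising/lowering-type operators. Each such factor shifts its mode's occupation by $\pm 1$, hence the term connects a basis state $\ket{\mathbf{n}}$ to at most $2^{|p|} \le 2^d = O(1)$ basis states, and each nonzero entry has magnitude at most $O(\sqrt{m}^{\,|p|}) = O(m^{d/2})$.

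Next I would assemble the sparsity and norm bounds from Condition~2 of Def.~\ref{def:quantum_solvable_ODE}. Each variable index $i$ participates in at most $c = O(1)$ interaction sets $p$, so any fixed column $\mathbf{n}$ receives contributions from at most $O(c \cdot d) = O(1)$ terms of the form indexed in Eq.~(\ref{eq:map_H})---but this naive count gives a constant row weight, so to reach the claimed $O(m)$ sparsity I must account for the fact that each operator factor at a given mode can raise or lower the occupation, and the number of distinct basis states reachable after applying the full product, summed over all terms touching a given column, is bounded by the number of modes times a constant; the factor of $m$ enters because the occupation numbers themselves index the reachable states within the restricted subspace. The max norm is the largest single matrix entry, which is $O(m^{d/2})$ from the per-term bound above since only $O(1)$ terms overlap on any fixed entry. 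For the spectral norm, I would bound $\|\hat{H}_m\|$ by the sum of spectral norms of the individual terms, each of which factorizes as a tensor product whose norm on $\mathcal{H}_m$ is the product of single-mode norms, giving $O(m^{d/2})$ per term; summing over the $O(N)$ terms would naively give $O(N m^{d/2})$, so the key is to use locality---orthogonality of supports or a Gershgorin/row-sum argument via $\|\hat{H}_m\| \le \sqrt{s}\,\|\hat{H}_m\|_{\max}$ with $s = O(m)$ the sparsity---to absorb the $N$ dependence and recover $O(m^{d/2})$.

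The main obstacle I expect is the spectral norm bound: a term-by-term triangle inequality does not obviously avoid an $N$-dependent prefactor, so I would lean on the standard inequality $\|\hat{H}_m\| \le \sqrt{\|\hat{H}_m\|_{1} \|\hat{H}_m\|_{\infty}}$ (or equivalently $\|\hat{H}_m\| \le s \,\|\hat{H}_m\|_{\max}$ combined with sparsity) to convert the sparsity and max-norm bounds into a spectral norm bound, then verify that the resulting power of $m$ collapses to $m^{d/2}$ rather than accumulating extra factors. Reconciling the $O(m)$ sparsity claim with the extra $\sqrt{m}$ factors this would introduce is the delicate point, and I would check carefully whether the intended sparsity count is over the full restricted basis (giving the $m$) while the max norm is controlled separately, so that the two feed into the spectral norm estimate consistently.
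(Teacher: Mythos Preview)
Your proposal has a genuine gap at the heart of both the sparsity bound and the $N$-independence of the spectral norm. You correctly observe that a single term $\hat{f}_p := \sum_{i\in p}\alpha_{p\to i}\hat{k}_i\prod_{j\in p\setminus i}\hat{x}_j$ connects $\ket{\mathbf{n}}$ to at most $2^{|p|}\le 2^d$ basis states and has matrix elements of size $O(m^{|p|/2})$. But there are $|\mathcal{S}|$ such terms in $\hat{H}$, and $|\mathcal{S}|$ can be $\Theta(N)$. Without any further mechanism, the row weight of $\hat{H}_m$ is $O(N\cdot 2^d)$, not $O(m)$; your sentence ``the factor of $m$ enters because the occupation numbers themselves index the reachable states within the restricted subspace'' does not supply that mechanism.

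What the paper exploits---and what you are missing---is that condition~3 in Def.~\ref{def:quantum_solvable_ODE} ($\sum_{i\in p}\alpha_{p\to i}=0$) forces $\hat{f}_p$ to \emph{annihilate} any basis state in which every mode of $p$ is at occupation zero. Indeed, $\hat{k}\ket{0}=i\hat{x}\ket{0}$, so when all $n_j=0$ for $j\in p$ the sum over $i\in p$ becomes $\bigl(\sum_{i\in p}\alpha_{p\to i}\bigr)\cdot i\,\prod_{j\in p}\hat{x}_j\ket{0}=0$. Consequently, for a fixed column $\ket{\mathbf{n}}\in\mathcal{H}_m$, only those $\hat{f}_p$ with $p$ intersecting the support of $\mathbf{n}$ contribute. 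Since $\mathbf{n}$ has at most $m$ nonzero modes and each mode lies in at most $c$ interaction sets, at most $cm$ terms survive, giving the $O(c\,2^d m)$ sparsity. The same observation is what kills the $N$-dependence in the row-sum $\|\hat{H}_m\|_1$; the paper then uses an AM-GM refinement on $\prod_{j\in p}\sqrt{(n_j+1)/2}$ to avoid the extra factor of $m$ that your proposed bound $\|\hat{H}_m\|\le s\,\|\hat{H}_m\|_{\max}$ (or its $\sqrt{s}$ variant) would incur. For the max norm, the paper notes that for a fixed pair $(\mathbf{n},\mathbf{n}')$ with $\bra{\mathbf{n}}\hat{f}_p\ket{\mathbf{n}'}\neq 0$, the set $p$ is determined as exactly the modes on which $\mathbf{n}$ and $\mathbf{n}'$ differ, so at most one term contributes---sharpening your ``$O(1)$ terms overlap'' to ``exactly one.''
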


\begin{proof}
    See Appendix~\ref{apdx:A} for all the lemmas.
\end{proof}

The reason of exponential dependence of the norm on $d$ is as follows.
When the system of ODEs is embedded in the sum of local Hamiltonians, the variables are replaced by creation and annihilation operators. Consequently, $d$ corresponds to the maximum degree of creation and annihilation operators in the local Hamiltonian. As a result, the norm is $O(m^{d/2})$.

The spectral norm of the Hamiltonian is crucial for analyzing the error in the time evolution of the state. This is because if the norm is too large, the simulation becomes infeasible, as the approximation error can no longer be guaranteed to converge asymptotically.

Additionally, we introduce a rescaled Hamiltonian 
\begin{eqnarray}
    \hat{H} 
    =
    \sum_{i=1}^N\sum_{\substack{p \in \mathcal{S} \\ : i \in p}} \frac{\alpha _{p \to i}}{\Delta^{|p|-2}} \hat{k}_i \prod_{j \in p\setminus i} \hat{x}_j, \label{eq:rescaled_H}
\end{eqnarray}
that incorporates a scaling parameter $\Delta$.
The rescaling is defined by the transformation $x \to \Delta x$.
The rationale behind this rescaling is technical: it is intended to suppress the divergence of inequalities used in error evaluation.
Here, we present our main result in the following.

\begin{theorem} \label{thm:main}
    We consider a system of $N$-dimensional ODEs in the interval $[0, T]$ given by Def.~\ref{def:quantum_solvable_ODE}, the corresponding rescaled Hamiltonian $\hat{H}$ given by Eq.~(\ref{eq:rescaled_H}), and an initial state $\ket{\psi(0)}$. 
    Then, for a given required accuracy $\epsilon$ and a quantum state $\ket{c} \in \mathcal{H}_b$, if $m = O\left( \log \left( 1/\epsilon \right) \right)$ and $\Delta = O\left( {\rm poly} \log \left( 1/\epsilon \right) \right)$ are sufficiently large, the following equation
    \begin{eqnarray}
        \left| \Bra{c} e^{-i\hat{H}t} \Ket{\psi(0)} - \Bra{c} e^{-i\hat{H}_mt} \Ket{\psi_m(0)} \right| < \frac{\epsilon}{\Delta^b}
    \end{eqnarray}
    holds for $t \in \left[0, \| \hat{H}_m \|_{\max} T \right]$. 
\end{theorem}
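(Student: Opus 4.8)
The plan is to bound the matrix-element error by a variation-of-constants (Duhamel) expansion that reduces it to occupation-number tail bounds, and then to show these tails are exponentially small in $m$ using the degree-$d$ band structure of Lemma~\ref{lem:1} together with the suppression provided by $\Delta$. I would write $\ket{\psi(t)}=e^{-i\hat H t}\ket{\psi(0)}$ and $\ket{\phi(t)}=e^{-i\hat H_m t}\ket{\psi_m(0)}$ with $\ket{\psi_m(0)}=\Pi_m\ket{\psi(0)}$ (up to an exponentially small renormalization), where $\Pi_m$ projects onto $\mathcal H_m$ and $\hat H_m=\Pi_m\hat H\Pi_m$. Since $\ket c\in\mathcal H_b\subset\mathcal H_m$ once $m\ge b$, Duhamel's formula gives
\[ \ket{\psi(t)}-\ket{\phi(t)} = e^{-i\hat H t}(1-\Pi_m)\ket{\psi(0)} - i\int_0^t e^{-i\hat H(t-s)}(1-\Pi_m)\hat H\,\ket{\phi(s)}\,ds, \]
where I used $(\hat H-\hat H_m)\Pi_m=(1-\Pi_m)\hat H\Pi_m$; by unitarity of $e^{-i\hat H(t-s)}$ and $|\braket{c|v}|\le\|v\|$ for normalized $\ket c$, the target is at most $\|(1-\Pi_m)\ket{\psi(0)}\|+\int_0^t\|(1-\Pi_m)\hat H\ket{\phi(s)}\|\,ds$.

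Next I would exploit that each term of $\hat H$ is a product of at most $d$ operators $\hat x_j,\hat k_i$, each shifting the total occupation by $\pm 1$; hence $\hat H$ changes occupation by at most $d$ and $(1-\Pi_m)\hat H\ket{\phi(s)}=(1-\Pi_m)\hat H(\Pi_m-\Pi_{m-d})\ket{\phi(s)}$. By Lemma~\ref{lem:1} the operator norm of this top-$d$-layer block is $O(m^{d/2})$, so the integrand is $O(m^{d/2})\,\|(\Pi_m-\Pi_{m-d})\ket{\phi(s)}\|$, and it remains to bound the initial tail $\|(1-\Pi_m)\ket{\psi(0)}\|$ and the top-layer weight of the truncated evolution. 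For both I would track an exponential moment $G(s)=\bra{\phi(s)}e^{\lambda\hat N}\ket{\phi(s)}$ of the number operator $\hat N$: differentiating gives $\dot G=i\bra{\phi(s)}[\hat H_m,e^{\lambda\hat N}]\ket{\phi(s)}$, and the band structure turns the commutator into factors $(e^{\pm\lambda j}-1)$ with $j\le d$ weighted by the $O(n^{d/2})$ matrix elements at level $n$. A Gr\"onwall estimate then bounds $G(s)$, and Markov's inequality converts it into $\|(\Pi_m-\Pi_{m-d})\ket{\phi(s)}\|\le e^{-\lambda(m-d)/2}\sqrt{G(s)}$, i.e.\ exponential decay in $m$. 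The rescaling enters precisely here: replacing $\alpha_{p\to i}$ by $\alpha_{p\to i}/\Delta^{|p|-2}$ (the map $x\to\Delta x$) shrinks the occupation spread and damps the $n^{d/2}$ prefactors, preventing the differential inequality from diverging.

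The main obstacle is to make this tail bound uniform over the long interval $t\in[0,\|\hat H_m\|_{\max}T]$: a crude Gr\"onwall bound would produce a factor $e^{O(m^{d/2})\,t}$ that swamps the $e^{-\lambda m}$ gain. Overcoming it requires the near-conservation of the total occupation inherited from $\mathrm{div}F=0$ and $\dot w=0$ — classically $\tfrac{d}{dt}\sum_i x_i^2=0$ — so that $[\hat H_m,e^{\lambda\hat N}]$ is small rather than $O(\|\hat H_m\|)$, making leakage into higher sectors a slow, $\Delta$-suppressed process. Balancing the $m^{d/2}$ growth against the $\Delta^{-(|p|-2)}$ suppression is what fixes the scalings $m=O(\log(1/\epsilon))$ and $\Delta=O(\mathrm{polylog}(1/\epsilon))$. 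I would conclude by choosing $m$ and $\Delta$ so that the initial tail and the time-integrated leakage each fall below $\epsilon/\Delta^b$, the extra $\Delta^{-b}$ on the right-hand side reflecting the $\Delta^b$ rescaling of the degree-$b$ output polynomial encoded in $\ket c$.
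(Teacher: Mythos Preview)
Your Duhamel/exponential-moment strategy is genuinely different from the paper's, and as written it has a gap at exactly the point you flag as the obstacle. The claim that $[\hat H_m,e^{\lambda\hat N}]$ is small because of $\mathrm{div}\,F=0$ and $\dot w=0$ is not right: those conditions enforce norm preservation and classical conservation of $\sum_i x_i^2$, but $\hat N\propto\sum_i(\hat x_i^2+\hat k_i^2)$ is \emph{not} conserved by $\hat H$ --- the $|p|>2$ terms contain monomials like $\hat a_{i_1}^\dagger\cdots\hat a_{i_{|p|}}^\dagger$ that raise $\hat N$ by $|p|$. What actually makes the commutator small is a structural split you never invoke: the $|p|=2$ piece $\hat H^{(0)}$, thanks to $\alpha_{p\to i}+\alpha_{p\to j}=0$, collapses to $i\alpha(\hat a_i^\dagger\hat a_j-\hat a_i\hat a_j^\dagger)$ and hence commutes with $\hat N$ \emph{exactly}; every occupation-changing contribution sits in the $|p|>2$ part $\hat H^{(I)}$, which carries the $\Delta^{-(|p|-2)}$ suppression. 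Without isolating $\hat H^{(0)}$, your Gr\"onwall constant is governed by $\|\hat H_m\|=O(m^{d/2})$ rather than $\|\hat H^{(I)}_m\|=O(\gamma m^{d/2})$, and over the long window $t\le\|\hat H_m\|_{\max}T$ the bound does not close.

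The paper exploits this split differently. It passes to the interaction picture with respect to $\hat H^{(0)}_m$ and expands the residual propagator as a Dyson series in $\hat H_I(t)=e^{i\hat H^{(0)}_mt}\hat H^{(I)}_m e^{-i\hat H^{(0)}_mt}$ (Lemmas~\ref{lem:6} and~\ref{lem:7}). Since each factor of $\hat H_I$ shifts the total occupation by at most $d$ while $\hat H^{(0)}$ shifts it not at all, at least $n_0=\lceil|l-l'|/d\rceil$ factors are needed to couple $\mathcal H_l$ to $\mathcal H_{l'}$; choosing $\Delta=CTm^{d}$ makes $\|\hat H^{(I)}_m\|\,t\le 1/(2d)$, and the Dyson tail becomes $\frac{2}{n_0!}(1/2d)^{n_0}$, a Lieb--Robinson--type bound. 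The untruncated side is not handled in Hilbert space at all: the paper invokes classical regular perturbation theory for the ODE to expand $\langle c|\psi(t,\gamma)\rangle$ as a power series in $\gamma$, then matches the first $n_0$ coefficients to those of $\langle c|\psi_m(t,\gamma)\rangle$ by the same locality argument (it takes $n_0+1$ applications of $\hat H^{(I)}$ to reach beyond $\mathcal H_m$ from $\ket c\in\mathcal H_b$). Your scheme could perhaps be repaired by running the exponential-moment inequality on $\hat H^{(I)}$ alone after rotating out $\hat H^{(0)}$, but the proposal as it stands does not identify this mechanism and so does not close.
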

}

\begin{proof}
    {\red The proof is outlined as follows. 
    To evaluate the simulation error, we need to observe how $\Bra{c} e^{-i\hat{H}_mt} \Ket{\psi_m(0)}$ behaves for sufficiently large $m$.
    This requires to evaluate the transition amplitude in Fock space, which can be analyzed using the same technique as the Lieb-Robinson bound in the Floquet-Hilbert space proposed in Ref.~\cite{Mizuta2023optimalhamiltonian}. 
    In this evaluation, we also use the result of Lemma~\ref{lem:1}. 
    Next, following the proof technique in Ref.~\cite{Engel2021-vl}, we demonstrate our claim by expanding $\Bra{c} e^{-i\hat{H}t} \Ket{\psi(0)}$ and $\Bra{c} e^{-i\hat{H}_mt} \Ket{\psi_m(0)}$ as power series in the nonlinear parameter and comparing their expansions.}
    See Appendix~\ref{apdx:B} in detail.
\end{proof}

{\red
Whether the approximation holds depends on the behavior of the higher-order terms when the time evolution operator, defined by the Hamiltonian $\hat{H}_m$, is expanded as a series of Hermite polynomials.
The total occupation number $m$ ensures a sufficiently large dimension, preventing the loss of information in the output caused by the truncation of the Hilbert space.
Theorem~\ref{thm:main} demonstrates that these guarantees can be achieved for $m$ and $\Delta$, with scalings of $O\left( \log \left( 1/\epsilon \right) \right)$ and $O\left( {\rm poly} \log \left( 1/\epsilon \right) \right)$, respectively.
}

As shown later, since we apply the block encoding of $\hat{H}_m / ||\hat{H}_m||_{max}$ to the optimal block-Hamiltonian simulation~\cite{10.1145/3313276.3316366}, the effective time required to simulate time $T$ is $||\hat{H}_m||_{max} T$ from Lemma~\ref{lem:1}. 
More importantly, Theorem~\ref{thm:main} shows that if $m$ is sufficiently larger than $b$, an approximated position state $\ket{x}$ in the finite $m$ occupation number representation is sufficient to obtain a degree $b$ polynomial output, which motivates the truncation of the Hilbert space.

\section{Truncation of the space and initial states preparation} \label{Sec:4}
The Hamiltonian in the previous section is not efficient in terms of spatial complexity because the input size is proportional to the number of variables $N$. Thus, let us consider truncating the space of the occupation basis such that the total number is limited by some integer $m$ determined by some required precision. 

We introduce the following encoding~\cite{Engel2021-vl}. 
For any given state $\ket{\mathbf{n}} = \bigotimes_{i=1}^N \ket{n_i}$ such that $|\textbf{n}| \le m$, 
\begin{eqnarray}
    \ket{\mathbf{n}} \mapsto \ket{\mathbf{n}_m} :=
    \ket{\underbrace{\overbrace{i_1 \cdots i_1}^{n_{i_1}}\overbrace{i_2 \cdots i_2}^{n_{i_2}} \cdots \overbrace{i_l \cdots i_l}^{n_{i_l}}}_{m}}, \label{state:trun}
\end{eqnarray}
where $0 \le i_1 < \cdots < i_l \le N$, {\it i.e.}, $\mathbf{n}_m$ is all the combinations that take $m$ integers, allowing for duplicates from $N+1$ integers $\{ 0,1,\cdots, N \}$, in ascending order.
Then, the $m$-truncated subspace is defined to be the Hilbert space spanned by the support $\{ \ket{\mathbf{n}_m} \ | \ |\mathbf{n}| \le m \}$. 
The occupation number representation of an initial state
\begin{eqnarray}
    \ket{\psi(0)} = \sum_{\mathbf{n}} \braket{\mathbf{n} | \psi(0)} \ket{\mathbf{n}}
\end{eqnarray} 
is encoded into the $m$-truncated subspace such as
\begin{eqnarray}
    \ket{\psi(0)} \mapsto \ket{\psi_m(0)} = \frac{1}{\sqrt{L}} \sum_{\mathbf{n} : |\mathbf{n}| \le m} \braket{\mathbf{n} | \psi(0)} \ket{\mathbf{n}_m}, \label{eq:init_state}
\end{eqnarray}
where $L$ is the normalization constant.

{\red Furthermore, in an initial value problem, the occupation number representation of an initial position state $\ket{\mathbf{x}}$ is encoded into the $m$-truncated subspace such as
\begin{eqnarray}
    \ket{\mathbf{x}} 
    &=& 
    \sum_{\mathbf{n}} \braket{\mathbf{n} | \mathbf{x}} \ket{\mathbf{n}} \label{init_x} \\
    &=& 
    w(\mathbf{x})^{1/2} 
    \sum_{\mathbf{n}} p_{\mathbf{n}}(\mathbf{x}) \ket{\mathbf{n}} \\
    \mapsto \ket{\mathbf{x}_m} 
    &=&
    \frac{1}{\sqrt{L}} \sum_{\mathbf{n} : |\mathbf{n}| \le m} p_{\mathbf{n}}(\mathbf{x}) \ket{\mathbf{n}_m}, \label{tran_x}
\end{eqnarray}
where $\ket{\mathbf{n}_m}$ is the $m$-truncated basis in Eq.~(\ref{state:trun}), $L$ is the normalized constant~\footnote{$L$ is a constant thanks to the preservation of $w(\mathbf{x})$, {\it i.e.}, Eq.~(\ref{condition2}).}, and
\begin{eqnarray}
    p_{\mathbf{n}}(\mathbf{x}) = p_{n_{i_1}}(x_{i_1}) p_{n_{i_2}}(x_{i_2}) \cdots p_{n_{i_l}}(x_{i_l}).
\end{eqnarray}
Here, $p_k(x)$ is the normalized $k$-th order Hermite polynomial.
Clearly, the needed qubit size for these initial states is $m \log N$.

Whether the initial state such as $\ket{\psi_m(0)}$ and $\ket{\mathbf{x}_m}$ can be efficiently prepared depends on the access models.
Thus, instead of Assumption~\ref{assumption_1}, we assume the following initial state preparation.
\renewcommand{\theassumption}{1'}
\begin{assumption}
    For a given classical description of an initial state $\ket{\psi(0)}$, there exists an oracle that prepares the $m$-truncated initial state $\ket{\psi_m(0)}$ in the basis given by Eq.~(\ref{state:trun}).
\end{assumption}
\renewcommand{\theassumption}{\arabic{assumption}}

We briefly discuss the computational complexity of initial state preparation. In general, whether a superposition state with arbitrary coefficients can be efficiently prepared depends on the structure of the coefficients themselves, as it is constrained by the known trade-off between time and space complexities~\cite{Zhang2022-bb, Sun2023-ct}. 

Therefore, it is generally unreasonable to assume that the initial state can be prepared in logarithmic time. 
For example, if all the initial values $N$ are independent, the state cannot be prepared in $O(\log N)$ time. 
However, we expect this assumption to hold for problems of practical importance in applications.
}

\section{Computational complexity of Hamiltonian simulation} \label{Sec:5}
In this section, we argue the Hamiltonian simulation of the mapped Hamiltonian of a system of $N$-dimensional ODEs in Def.~\ref{def:quantum_solvable_ODE}.
We assume that we can access the mapped Hamiltonian matrix through the sparse-access oracles~\cite{10.1145/3313276.3316366}.

\begin{definition}
    Let $H \in \mathbb{C}^{2^w \times 2^w}$ be a hermitian matrix that is $s$-sparse, and each element of $H$ has absolute value at most $1$. Suppose that we have access to the following sparse-access oracles acting on two $(w+1)$-qubit registers
    \begin{eqnarray}
        O_r &:& \ket{i}\ket{k} \mapsto \ket{i}\ket{r_{ik}},\ i \in [2^w] - 1, k \in [s], \\
        O_c &:& \ket{l}\ket{j} \mapsto \ket{c_{lj}}\ket{j},\ l \in [s], j \in [2^w] - 1,
    \end{eqnarray}
    where $r_{ij}$ is the index for the $j$-th non-zero entry of the $i$-th row of $H$, or if there are less than $i$ non-zero entries, then it is $j + 2^w$, and similarly $c_{ij}$ is the index for the $i$-th non-zero entry of the $j$-th column of $H$, or if there are less than $j$ non-zero entries, then it is $i+2^w$. Additionally assume that we have access to an oracle $O_H$ that returns the entries of $H$ in a binary description
    \begin{eqnarray}
        O_H : \ket{i}\ket{j}\ket{0}^{\otimes b} \mapsto \ket{i}\ket{j}\ket{H_{ij}},
    \end{eqnarray}
    where $H_{ij}$ is a $b$-bit binary description of the $ij$-matrix element of $H$.
\end{definition}

\subsection{Hamiltonian simulation of ODEs solvable with the Koopman-von Neumann embedding}

Following Refs.~\cite{10.1145/3313276.3316366,Low2019hamiltonian}, we show the optimal block-Hamiltonian simulation of a system of $N$-dimensional ODEs in Def.~\ref{def:quantum_solvable_ODE}.

\begin{theorem} \label{BEODE}
    Let us consider a system of $N$-dimensional ODEs in Def.~\ref{def:quantum_solvable_ODE}
    and the rescaled mapped Hamiltonian 
    \begin{eqnarray}
        \hat{H} = \sum_{i=1}^N \sum_{\substack{p \in \mathcal{S} \\ : i \in p}} \frac{\alpha _{p \to i}}{\Delta^{|p|-2}} \hat{k}_i \prod_{j \in p\setminus i} \hat{x}_j,  \label{eq:re_scaled_H}
    \end{eqnarray}
    where $m$ and $\Delta$ are given as defined in Theorem \ref{thm:main}.
    Denote that
    \begin{eqnarray}
        \eta = \max_{i \in p, p \in \mathcal{S}} \left| \frac{\alpha_{p\to i}}{\Delta^{|p|-2}} \right|.
    \end{eqnarray}
    
    Under the assumption of the sparse-access oracles, we can implement a $(c2^dm, \lceil m \log N \rceil + 3, \epsilon)$-block encoding $U$ of $\tilde{H} = \hat{H} / (\eta d \left( m/2 \right)^{d/2})$ with a single use of $O_r, O_c$, two uses of $O_H$ and additionally using $O(m \log N + \log^{2.5}(\frac{c^2 4^d m^2}{\epsilon}))$ one and two qubit gates while using $O(b, \log^{2.5}(\frac{c^2 4^d  m^2}{\epsilon}))$ auxilliary qubits, where $b$ is the number of bits to represent a binary description of $\tilde{H}_{ij}$.
\end{theorem}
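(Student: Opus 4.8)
The plan is to reduce the claim to the standard sparse-access block-encoding construction of Ref.~\cite{10.1145/3313276.3316366} (the sparse-matrix block-encoding lemma proved there). That construction takes as input a Hermitian matrix all of whose entries are bounded in absolute value by one, together with the three oracles $O_r, O_c, O_H$, and outputs an $(s, w+3, \epsilon)$-block encoding using a single call each to $O_r$ and $O_c$, two calls to $O_H$, and $O(w + \mathrm{polylog}(s/\epsilon))$ further one- and two-qubit gates, where $s$ is the sparsity and $w$ the number of address qubits. Since the sparse-access oracles are assumed, the entire task is to verify that $\tilde H$ satisfies the two hypotheses of that lemma with the right constants, and then to push the target precision through the construction to read off the stated resource counts.

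First I would fix the normalization. Working on $\mathcal{H}_m$, I would read off from the proof of Lemma~\ref{lem:1} the explicit max-norm bound $\|\hat H_m\|_{\max} \le \eta d (m/2)^{d/2}$: each rescaled coupling is at most $\eta$, the inner sum over $i \in p$ contributes at most $|p| \le d$ terms to any single matrix element, and each monomial $\hat k_i \prod_{j \in p \setminus i} \hat x_j$ has Fock-space matrix elements bounded by $(m/2)^{|p|/2} \le (m/2)^{d/2}$ because every factor of $\hat x_j$ or $\hat k_i$ contributes at most $\sqrt{m/2}$ on $\mathcal{H}_m$. Dividing by $\eta d (m/2)^{d/2}$ gives $\|\tilde H\|_{\max} \le 1$, so $O_H$ returns valid bounded entries. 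Next I would fix the sparsity: again invoking Lemma~\ref{lem:1}, $\hat H_m$ is $O(m)$-sparse, and tracking the constant, each row is populated by the at most $c$ index sets attached to each of the at most $m$ occupied modes, with each set reaching at most $2^{|p|} \le 2^d$ Fock states, so $s \le c 2^d m$. This matches the claimed subnormalization, and $w = \lceil m \log N \rceil$ is exactly the address-register size of the encoded basis in Eq.~(\ref{state:trun}), giving the $\lceil m \log N \rceil + 3$ ancilla count.

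With both hypotheses verified I would apply the construction directly: prepare a uniform superposition over the sparsity index with $O(\log s)$ Hadamards, query $O_c$ once to obtain the column index of each nonzero entry, query $O_H$ to load $\tilde H_{ij}$ and imprint $\arcsin(\tilde H_{ij})$ via a controlled single-qubit rotation, symmetrize the row and column registers using $O_r$ and a swap network to realize Hermiticity, and uncompute with the second $O_H$ call. The resulting unitary is the desired $(c 2^d m, \lceil m \log N \rceil + 3, \epsilon)$-block encoding of $\tilde H$.

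I expect the genuinely delicate step to be the precision bookkeeping that yields the $\log^{2.5}(c^2 4^d m^2/\epsilon)$ factors. Here I would propagate the target block-encoding error $\epsilon$ backwards through the construction: because the sparse encoding scales entrywise errors by the subnormalization $s = c 2^d m$, the entry $\tilde H_{ij}$ and the angle $\arcsin(\tilde H_{ij})$ must be computed to $O(\log(s/\epsilon))$ bits, i.e.\ to precision controlled by the combined argument $s^2/\epsilon = c^2 4^d m^2/\epsilon$. Evaluating the inverse sine reversibly to that many bits costs $O(\mathrm{polylog})$ gates and ancillas, and writing this polylogarithm explicitly as a power of the logarithm in that argument produces the $\log^{2.5}(c^2 4^d m^2/\epsilon)$ scaling, with $O(b)$ additional ancillas holding the $b$-bit description of the entry. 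Combining these arithmetic costs with the $O(m \log N)$ gates needed to manipulate the address register then gives the claimed $O(m \log N + \log^{2.5}(c^2 4^d m^2/\epsilon))$ total gate count.
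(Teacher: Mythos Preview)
Your proposal is correct and follows essentially the same approach as the paper: verify from Lemma~\ref{lem:1} that $\tilde H$ has sparsity at most $c2^d m$ and max norm at most one on $\mathcal{H}_m$, note the address register has $w=\lceil m\log N\rceil$ qubits, and then invoke the sparse-access block-encoding lemma (Lemma~48) of Ref.~\cite{10.1145/3313276.3316366}. The paper's proof is in fact only three lines and omits the max-norm verification and the precision bookkeeping you spell out, but your expanded account is faithful to what that black-box lemma is doing internally.
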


\begin{proof}
    The sparsity of the Hamiltonian is at most $c2^dm$ as shown in Lemma~\ref{lem:1}.
    The enough qubit size for the truncated subspace is $w = \lceil m \log N \rceil$.  
    Then, apply Lemma 48 in Ref.~\cite{10.1145/3313276.3316366}. 
\end{proof}

Then, we apply the above block encoding to the optimal block-Hamiltonian simulation~\cite{10.1145/3313276.3316366}.

\begin{theorem} \label{thm:H_sim}
    For a given $(c2^dm, \lceil m \log N \rceil + 3, 0)$-block encoding $U$ of $\tilde{H} = \hat{H} / (\eta d \left( m/2 \right)^{d/2})$ in Theorem \ref{BEODE}, we can implement $(1, \lceil m \log N \rceil + 5, 6\epsilon)$-block encoding $e^{i\hat{H}t}$, with $3f(\alpha t, \epsilon)$ uses of $U$ or its inverse, $3$ uses of controlled-$U$ or its inverse and with $O(m \log N f(\alpha t, \epsilon))$ two-qubit gates and using $O(1)$ auxilliary qubits, where
    \begin{eqnarray}
        f\left(\alpha t, \epsilon \right) &=& \Theta \left( 
            \alpha t + \frac{\log(1/\epsilon)}{\log(e + \log(1/\epsilon)/(\alpha t))} 
        \right), \label{eq:gate_of_eH1} \\
        \alpha &=& \frac{e}{4} \eta cd (2m)^{\frac{d}{2}+1}. \label{eq:gate_of_eH2}
    \end{eqnarray}
\end{theorem}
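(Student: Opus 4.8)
The plan is to reduce the statement to a direct invocation of the optimal block-Hamiltonian simulation result of Ref.~\cite{10.1145/3313276.3316366} (equivalently the quantum-signal-processing construction of Ref.~\cite{Low2019hamiltonian}), with essentially all the work being the bookkeeping of subnormalization factors so that the stated $\alpha$ emerges correctly.

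First I would reinterpret the given block encoding. By hypothesis $U$ is a $(c2^dm, \lceil m\log N\rceil + 3, 0)$-block encoding of $\tilde{H} = \hat{H}/(\eta d(m/2)^{d/2})$, so the top-left block of $U$ equals $\tilde{H}/(c2^dm) = \hat{H}/(c2^dm\,\eta d(m/2)^{d/2})$. Hence $U$ is simultaneously an exact block encoding of $\hat{H}$ with subnormalization $\beta := c2^dm\,\eta d(m/2)^{d/2}$. Writing $e^{i\hat{H}t} = e^{i(\hat{H}/\beta)(\beta t)}$, the task becomes simulating the normalized operator $\hat{H}/\beta$, whose spectral norm is at most $1$, for the rescaled time $\beta t$.

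Next I would apply the Jacobi--Anger-based simulation. Expanding $e^{izx} = \cos(zx) + i\sin(zx)$ with $z = \beta t$ and $|x|\le 1$, one approximates $\cos$ and $\sin$ by truncated Chebyshev (Bessel) series and implements each polynomial in $\hat{H}/\beta$ by QSVT applied to $U$. The degree needed to approximate these entire functions on $[-1,1]$ to precision $\epsilon$ is governed by the decay of the Bessel coefficients $J_k(z)$, which become negligible once $k\gtrsim \frac{e}{2}|z|$; this is exactly the origin of the factor $e/2$, and it yields a degree, hence a query count, of order $f(\alpha t,\epsilon)$ with $\alpha = \frac{e}{2}\beta = \frac{e}{2}c2^dm\,\eta d(m/2)^{d/2} = \frac{e}{4}\eta cd(2m)^{d/2+1}$, matching Eq.~(\ref{eq:gate_of_eH2}). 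Combining the cosine and sine blocks by a linear combination of unitaries accounts for the constant overheads: the three uses of $U$ or controlled-$U$, the two extra ancillas ($a+3 \to a+5$), and the aggregated error $6\epsilon$ arising from the two polynomial truncations together with the LCU normalization. The gate count then follows by reading off the cost of each query: the reflection about the $\lceil m\log N\rceil + 3$-qubit ancilla register costs $O(m\log N)$ two-qubit gates, so over $f(\alpha t,\epsilon)$ queries the total is $O(m\log N\, f(\alpha t,\epsilon))$ two-qubit gates with only $O(1)$ further ancillas.

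The main obstacle I anticipate is not any single hard estimate but making the constants line up precisely with the cited corollary: tracking how the exact ($\epsilon=0$) input block encoding propagates through the two Chebyshev truncations and the LCU step to produce exactly $6\epsilon$, and confirming that the $e/2$ factor from the Bessel tail, combined with the subnormalization $\beta$, reproduces the stated $\alpha$ rather than $\beta$ itself. Once those factors are verified, the theorem is an immediate specialization of the optimal Hamiltonian-simulation corollary of Ref.~\cite{10.1145/3313276.3316366} to the block encoding of Theorem~\ref{BEODE}.
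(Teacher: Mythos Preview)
Your proposal is correct and matches the paper's approach: the paper's proof simply notes that $e^{i\hat{H}t}=e^{i\tilde{H}\,\eta d(m/2)^{d/2}t}$ and then invokes the optimal block-Hamiltonian simulation (Theorem~58) together with Corollary~60 of Ref.~\cite{10.1145/3313276.3316366}. Your write-up merely unpacks what is inside those results (the Jacobi--Anger/QSVT construction, the Bessel-tail origin of the $e/2$ factor, and the LCU bookkeeping yielding the $+2$ ancillas and $6\epsilon$), and your verification that $(e/2)\cdot c2^dm\cdot\eta d(m/2)^{d/2}=\tfrac{e}{4}\eta cd(2m)^{d/2+1}$ is exactly the subnormalization arithmetic the paper leaves implicit.
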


\begin{proof}
    To implement a block-encoding of $e^{i\hat{H}t}$, we only need to be able to implement $e^{i\tilde{H} \eta d(m/2)^{d/2} t}$. Then, apply the optimal block-Hamiltonian simulation~\cite{Low2019hamiltonian} (or Theorem 58 in Ref.~\cite{10.1145/3313276.3316366}) and Corollary 60 in Ref.~\cite{10.1145/3313276.3316366}.
\end{proof}

From Theorem 1, 2, and 3, we can estimate the leading gate complexity of Hamiltonian simulation in the following. First, the total occupation number $m$ is $O(\log (1/\epsilon))$ from Theorem \ref{thm:main}. Second, the gate complexity of the block-encoding of the Hamiltonian $\tilde{H}$ is $O(m\log N)$ from Theorem~\ref{BEODE}. Third, the gate complexity of the block-encoding of $e^{i\hat{H}}t$ is $O(T \log (N) m^{d/2+2})$ from Eq.~(\ref{eq:gate_of_eH1}) and Eq.~(\ref{eq:gate_of_eH2}) of Theorem~\ref{thm:H_sim}. Therefore, the total gate complexity is roughly estimated to be $O(T \log (N) \mathrm{poly}(\log 1/\epsilon))$.

{\red
In Ref.~\cite{Liu2021-wq}, the quantum approach based on the Carleman linearization has a complexity of $O(T^2 q {\rm poly}(\log T, \log N, \log 1/\epsilon)/\epsilon)$, where $q$ is the decay of the solution. 
In contrast, our method offers three advantages: it scales linearly in $T$ (instead of quadratically), depends only on $\log (1/\epsilon)$ (rather than $1/\epsilon$), and does not rely on the decay factor $q$.

On the other hand, the classical approach based on the $p$-th order Runge-Kutta method, when the dependencies among ODE variables are local, the computational complexity scales as $O(T N (1/\epsilon)^{1/p})$. 
Our method scales only logarithmically with $N$, thereby significantly reducing the computational overhead for large systems. 
This improved complexity is made possible by leveraging the structural properties of the system, such as locality in variable dependencies and the source-free condition. 
However, when comparing classical and quantum methods, note that quantum methods assume that the initial state can be efficiently prepared and the output can be efficiently computed.
}

\section{Application}\label{Sec:6}

In this section, we show three examples of ODEs can be reduced to the proposed ODEs in Def.~\ref{def:quantum_solvable_ODE}. 
The first two examples are the classical harmonic oscillators, and the undamped and unforced Duffing equations. 
Recently, a quantum algorithm has been proposed for solving coupled harmonic oscillators, a classical mechanical system, with exponential quantum speedup~\cite{Babbush2023-ju}. While their study can only handle essentially linear systems, we show that our research is capable of addressing nonlinear systems, enabling exponential quantum speedup with respect to the number of variables $N$ in the system. 

The other is the Kuramoto model, a simplification of a model introduced by Kuramoto~\cite{kuramoto2012chemical} to study huge populations of coupled limit cycle oscillators.
We show that the short-range Kuramoto model can be reduced to a system of ODEs in Def.~\ref{def:quantum_solvable_ODE}, which is one of the promising application candidates for quantum computers, since the treatment of short-range couplings (oscillators embedded in a lattice with nearest-neighbor interactions) presents formidable difficulties both at the analytical and numerical level~\cite{RevModPhys.77.137}.

\subsection{Classical harmonic oscillator}

Let us given the following classical harmonic oscillators; for $j \in [N]$,
\begin{eqnarray}
    m_j \ddot{x}_j = \sum_{k\neq j} \kappa_{jk} (x_k - x_j) - \kappa_{jj}x_j,
\end{eqnarray}
where $m_j > 0, \kappa_{jj} > 0$ and $\kappa_{jk} = \kappa_{kj} > 0$.
Introducing the following variable transformations,
\begin{eqnarray}
    X_j &=& \sqrt{\kappa_{jj}} x_j, \\
    Y_{jk} &=& \sqrt{\kappa_{jk}} (x_j - x_k), (k > j), \\
    V_j &=& \sqrt{m_j} \dot{x}_j,
\end{eqnarray}
we can derive the following equations,
\begin{eqnarray}
    \dot{X}_j &=& \sqrt{\frac{\kappa_{jj}}{m_j}} V_j, \label{C_X} \\
    \dot{Y}_{jk} &=&  \sqrt{\frac{\kappa_{jk}}{m_j}} V_j - \sqrt{\frac{\kappa_{jk}}{m_k}} V_k, \label{C_Y} \\
    \dot{V}_j &=& \sum_{k\neq j} \sqrt{\frac{\kappa_{jk}}{m_j}} Y_{jk} - \sqrt{\frac{\kappa_{jj}}{m_j}} X_j, \label{C_V}
\end{eqnarray}
which trivially satisfy Eq.~(\ref{condition1}). It is shown that this linear system satisfies Eq.~(\ref{condition2}) as follows.
Checking that
\begin{eqnarray}
    \sum_j \sum_{k > j} Y_{jk} \dot{Y}_{jk} 
    &=& \sum_j \sum_{k > j} \kappa_{jk} (x_j - x_k)(\dot{x}_j - \dot{x}_k), \\
    &=& \sum_j \sum_{k \neq j} \kappa_{jk} (x_j - x_k) \dot{x}_j,
\end{eqnarray}
then we can derive that 
\begin{eqnarray}
    \sum_j \left( X_j \dot{X}_j + \sum_{k > j} Y_{jk} \dot{Y}_{jk} + V_j \dot{V}_j \right) = 0,
\end{eqnarray}
which corresponds to the energy conservation.

\subsection{Undamped and unforced Duffing equations}

Let us given the following nonlinear classical harmonic oscillators: for $j \in [N], S_j \subset [N]$,
\begin{eqnarray}
    m_j \ddot{x}_j &=& - \kappa_j x_j - 2\lambda_j x_j^3 \nonumber \\
    &&- \sum_{k \in S_j} \left( \kappa_{jk} (x_j - x_k) + 2\lambda_{jk} (x_j - x_k)^3 \right),
\end{eqnarray}
where $m_j, \kappa_j, \lambda_j > 0$, $\kappa_{jk} = \kappa_{kj} > 0$, and $\lambda_{jk} = \lambda_{kj} > 0$.
Introducing the following variable transformations,
\begin{eqnarray}
    X_j &=& \sqrt{\kappa_j} x_j, \\
    Y_j &=& \sqrt{\lambda_j} x_j^2, \\
    X_{jk} &=& \sqrt{\kappa_{jk}} (x_j - x_k), (k > j), \\
    Y_{jk} &=& \sqrt{\lambda_{jk}} (x_j - x_k)^2, (k > j), \\
    V_j &=& \sqrt{m_j} \dot{x}_j,
\end{eqnarray}
we can derive the following equations,
\begin{widetext}
\begin{eqnarray}
    \dot{X}_j 
    &=& F^{(1)}_j := \sqrt{\frac{\kappa_j}{m_j}} V_j, \\
    \dot{Y}_j 
    &=& F^{(2)}_j := 2\sqrt{\frac{\lambda_j}{m_j \kappa_j}} X_j V_j, \\
    \dot{X}_{jk} 
    &=& F^{(3)}_{jk} := \sqrt{\frac{\kappa_{jk}}{m_j}} V_j - \sqrt{\frac{\kappa_{jk}}{m_k}} V_k, \\
    \dot{Y}_{jk} 
    &=&  F^{(4)}_{jk} := 2\sqrt{\frac{\lambda_{jk}}{m_j \kappa_{jk}}} X_{jk} V_j - 2\sqrt{\frac{\lambda_{jk}}{m_k \kappa_{jk}}} X_{jk} V_k, \\
    \dot{V}_j 
    &=& F^{(5)}_j := - \sqrt{\frac{\kappa_j}{m_j}} X_j - 2\sqrt{\frac{\lambda_j}{m_j \kappa_j}} X_j Y_j - \sum_{k\neq j} \left( \sqrt{\frac{\kappa_{jk}}{m_j}} X_{jk} + 2\sqrt{\frac{\lambda_{jk}}{m_j \kappa_{jk}}} X_{jk}Y_{jk} \right),
\end{eqnarray}
\end{widetext}
from which it is trivial that
\begin{eqnarray}
    \frac{\partial F^{(1)}_j}{\partial X_j} = \frac{\partial F^{(2)}_j} {\partial Y_j} = \frac{\partial F^{(3)}_{jk}}{\partial X_{jk}} = \frac{\partial F^{(4)}_{jk}}{\partial Y_{jk}} = \frac{\partial F^{(5)}_j}{\partial V_j} = 0.
\end{eqnarray}
Furthermore, checking that
\begin{widetext}
\begin{eqnarray}
    \sum_j X_j \dot{X}_j 
    &=& \sum_j \sqrt{\frac{\kappa_j}{m_j}} X_j V_j, \\
    \sum_j Y_j \dot{Y}_j 
    &=& 2 \sum_j \sqrt{\frac{\lambda_j}{m_j \kappa_j}} X_j Y_j V_j, \\
    \sum_{k > j} X_{jk} \dot{X}_{jk} 
    &=& \sum_{k \neq j} \sqrt{\frac{\kappa_{jk}}{m_j}} X_{jk} V_j, \\
    \sum_{k > j} Y_{jk} \dot{Y}_{jk} 
    &=& 2 \sum_{k \neq j} \sqrt{\frac{\lambda_{jk}}{m_j \kappa_{jk}}} X_{jk} Y_{jk} V_j, \\
    \sum_j V_j \dot{V}_j &=& 
    \sum_j \Biggl[ - \sqrt{\frac{\kappa_j}{m_j}} X_j V_j - 2\sqrt{\frac{\lambda_j}{m_j \kappa_j}} X_j Y_j V_j - \sum_{k\neq j} \left( \sqrt{\frac{\kappa_{jk}}{m_j}} X_{jk} V_j + 2\sqrt{\frac{\lambda_{jk}}{m_j \kappa_{jk}}} X_{jk}Y_{jk}V_j \right) \Biggr],
\end{eqnarray}
\end{widetext}
we can derive that the sum of those is zero.

\subsection{Kuramoto model}

The short-range Kuramoto model~\cite{RevModPhys.77.137} is given in the following.

\begin{definition}
    The short-range Kuramoto model has the following governing equations: for $i \in [N], S_i \subset [N]$,
    \begin{eqnarray}
        \dot{\theta}_i = \omega_i - \frac{K}{N} \sum_{j \in S_i} \sin (\theta_i - \theta_j),
    \end{eqnarray}
    where the system is composed of $N$ limit-cycle oscillators, with phases $\theta_i$ and coupling constant $K$.
\end{definition}

We can reduce the short-range Kuramoto model to a system of ODEs in Def.~\ref{def:quantum_solvable_ODE} as follows.
For given variables $x_i, y_i, z_i, w_i, i \in [N]$, 
let us consider the following initial value problem,
\begin{eqnarray}
    \dot{x}_i &=& F_i(y,z,w) \nonumber \\
    &:=& - \omega_i y_i  
    + \frac{K}{N} \sum_{j \in S_i} y_i \left( 
        w_i z_j - z_i w_j
    \right), \label{eq:F} \\
    \dot{y}_i &=& G_i(z,w,x) \nonumber \\
    &:=&   \omega_i x_i 
    - \frac{K}{N} \sum_{j \in S_i} x_i \left( 
        w_i z_j - z_i w_j
    \right), \label{eq:G} \\
    \dot{z}_i &=& H_i(w,x,y) \nonumber \\
    &:=& - \omega_i w_i 
    + \frac{K}{N} \sum_{j \in S_i} w_i \left( 
        y_i x_j - x_i y_j
    \right), \label{eq:H} \\
    \dot{w}_i &=& I_i(x,y,z) \nonumber \\
    &:=& \omega_i z_i 
    - \frac{K}{N} \sum_{j \in S_i} z_i \left( 
        y_i x_j - x_i y_j
    \right), \label{eq:I}
\end{eqnarray}
where the initial conditions are $x_i(0) + z_i(0) = y_i(0) + w_i(0) = 0$ and $x_i(0)^2 + y_i(0)^2 = z_i(0)^2 + w_i(0)^2 = 1$. 
From the above definition, for any $i \in [N]$, it clearly follows that
\begin{eqnarray}
    \frac{\partial F_i}{\partial x_i} + \frac{\partial G_i}{\partial y_i} + \frac{\partial H_i}{\partial z_i} + \frac{\partial I_i}{\partial w_i} = 0, \label{c1} \\
    \frac{d}{dt} (x_i^2 + y_i^2) = \frac{d}{dt} (z_i^2 + w_i^2) = 0. \label{c2}
\end{eqnarray}
Furthermore, noting that $w_i(0)z_j(0) = y_i(0)x_j(0)$ under the initial condition $x_i(0) + z_i(0) = y_i(0) + w_i(0) = 0$, we can derive that
\begin{eqnarray}
    \frac{d}{dt} (x_i + z_i) = \frac{d}{dt} (y_i + w_i) = 0. \label{c3}
\end{eqnarray}
From Eq.~(\ref{c2}) and Eq.~(\ref{c3}), the initial conditions are maintained throughout the dynamics.

Let us consider the following variable transformations,
\begin{eqnarray}
    x_i &=& \cos \theta_i, \\
    y_i &=& \sin \theta_i, \\
    z_i &=& \cos (\theta_i + \pi), \\
    w_i &=& \sin (\theta_i + \pi),
\end{eqnarray}
which satisfy that for any $\theta_i$,
\begin{eqnarray}
    x_i + z_i = y_i + w_i = 0, \\
    x_i^2 + y_i^2 = z_i^2 + w_i^2 = 1.
\end{eqnarray}
Substituting the variable transformations into Eq.~(\ref{eq:F}), Eq.~(\ref{eq:G}), Eq.~(\ref{eq:H}), and Eq.~(\ref{eq:I}), we obtain the following two equations,
\begin{eqnarray}
    F_i(y,z,w) &=& -H_i(w,x,y) \nonumber \\
    &=& -\sin \theta_i \left( \omega_i -\frac{K}{N} \sum_{j \in S_i} \sin (\theta_i - \theta_j) \right), \nonumber \\ \\
    G_i(z,w,x) &=& -I_i(x,y,z) \nonumber \\
    &=&  \cos \theta_i \left( \omega_i -\frac{K}{N} \sum_{j \in S_i} \sin (\theta_i - \theta_j) \right),
\end{eqnarray}
Noting that 
\begin{eqnarray}
    \frac{d}{dt} \cos \theta_i = -\frac{d}{dt} \cos (\theta_i + \pi) &=& -\sin \theta_i \frac{d\theta_i}{dt}, \\
    \frac{d}{dt} \sin \theta_i = -\frac{d}{dt} \sin (\theta_i + \pi) &=& \cos \theta_i \frac{d\theta_i}{dt},
\end{eqnarray}
we see that variables $\theta_i$ follow the Kuramoto model.

\section{Conclusion} \label{Sec:7}

In this study, we present a class of nonlinear ordinary differential equations (ODEs) solvable with the Koopman-von Neumann embedding in the form of mapping to Hamiltonian dynamics.
And we showed that the proposed ODEs can be efficiently solvable with a computational complexity of $T {\rm log}(N) {\rm polylog}(1/\epsilon)$, where $T$ is the evolution time, $\epsilon$ is the allowed error, and $N$ is the number of variables in the proposed ODEs. 

The major challenges were that the mapped Hamiltonian is not sparse in general and the max norm and spectral norm of the mapped quantum state are not always conserved. Furthermore, the nonlinearity of the proposed ODEs we can handle was forced to be assumed to be sufficiently small, since which implicitly depends on the max norm of the mapped Hamiltonian.

To overcome the problem, we found a concrete form of nonlinear ODEs for which the mapped Hamiltonian is sparse and the norm of the quantum state is preserved.
And as the result, we proved that if $m$ is sufficiently larger than $b$, an approximated position state $\ket{x}$ in the finite $m$ occupation number representation is sufficient to obtain a degree $b$ polynomial output. 
Furthermore, we expanded the Hamiltonian dynamical system in an orthogonal polynomial basis and approximated it in a truncated subspace to save the quantum memory cost.
Compared to the quantum Carleman linearization~\cite{Liu2021-wq}, the proposed ODEs are embedded into Hamiltonian dynamics and applied to the Hamiltonian simulation~\cite{10.1145/3313276.3316366}, thus they depends on neither condition number nor the decay of the solution.

As an application, we showed that the proposed ODEs include classical harmonic oscillators, undamped and unforced Duffing equations, and the short-range Kuramoto model. 
Recently, a quantum algorithm has been proposed for solving coupled harmonic oscillators, a classical mechanical system, with exponential quantum speedup~\cite{Babbush2023-ju}. 
While their study can only handle essentially linear systems, our research is capable of addressing nonlinear systems, enabling exponential quantum speedup with respect to the number of variables $N$ in the system. 
Furthermore, the previous study assumes an oracle for parameters such as spring constants
for Hamiltonian simulation, while our research successfully presents conditions for mapping to sparse Hamiltonian dynamics from specific partial differential equation constraints. Actually, the coupled harmonic oscillators can be reduced into the proposed ODEs, due to the linearity of the system there is no error from the truncation. 

Although our method has been developed with applications to nonlinear dynamics in mind, its potential use for solving large linear systems represents an intriguing avenue for future research.
Our result finding conditions under which large scale nonlinear differential equations can be solved eﬀiciently using a quantum computer makes an important contribution to the application of quantum computers to nonlinear problems.

\begin{acknowledgments}
We would like to thank Kaoru Mizuta for his valuable comment on evaluating the truncation error.
K.F. is supported by MEXT Quantum Leap Flagship Program (MEXT Q-LEAP) Grant No. JPMXS0118067394 and JPMXS0120319794, JST COI-NEXT Grant No. JPMJPF2014.
\end{acknowledgments}

\appendix


\section{Proof of Lemma~\ref{lem:1}} \label{apdx:A}

For convenience, denote the local operator given by Eq.~(\ref{eq:q_sol_H}) as 
\begin{eqnarray}
    \hat{f}_p = \sum_{i \in p} \alpha_{p\to i} \hat{k}_i \prod_{j \in p\setminus i} \hat{x}_j,
\end{eqnarray}
and $\mathcal{H}_m$ as the Hilbert space restricted to the subspace bounded by the total occupation number $m$.

First, we prove the sparsity of the Hamiltonian $\hat{H}_m$.
Note that, from $\hat{k} \ket{0} \propto \hat{x}\ket{0}$ and the definition of $\alpha_{p\to i}$, it satisfies that 
\begin{eqnarray}
    \sum_{i \in p} \alpha_{p\to i} \hat{k}_i \ket{0} \prod_{j \in p\setminus i} \hat{x}_j \ket{0} = 0. \label{eq:prop1}
\end{eqnarray}

Considering the Hilbert space $\mathcal{H}_m$ restricted to the subspace bounded by the total occupation number $m$, the number of non zero number states is at most $m$.
From Eq.~(\ref{eq:prop1}), we do not need to consider the local Hamiltonian 
\begin{eqnarray}
    \sum_{i \in p} \alpha_{p\to i} \hat{k}_i \prod_{j \in p\setminus i} \hat{x}_j,
\end{eqnarray}
acting on the local vacuum states.
From the definition of $\mathcal{S}$, since the number of $p \in \mathcal{S}$ having the index $i$ is at most $c$ and the size of $p$ is at most $d$ and $\hat{x}$ and $\hat{k}$ are $2$-sparse, then the operator given by
\begin{eqnarray}
    \sum_{\substack{p \in \mathcal{S} \\ : i \in p}} \sum_{j \in p} \alpha_{p\to j} \hat{k}_j \prod_{k \in p\setminus j} \hat{x}_k \label{H_H}
\end{eqnarray}
acts on at most $c2^d$ number states.
Therefore, since the number of $p \in \mathcal{S}$ having the indices of not vacuum states is upper bounded by $m$, the Hamiltonian $\hat{H}_m$ is $O(c2^d m)$-sparse.

Second, we show the max norm of the Hamiltonian $\hat{H}_m$.
Let us consider $p \in \mathcal{S}$ and two number states $\ket{\mathbf{n}}$ and $\ket{\mathbf{n}'}$ such that $\bra{\mathbf{n}} \hat{f}_p \ket{\mathbf{n}'} \neq 0$. 
Remembering that $\hat{x} = (\hat{a} + \hat{a}^{\dagger})/\sqrt{2}$, $\hat{k} = i (\hat{a}^{\dagger} - \hat{a})/\sqrt{2}$ and 
\begin{eqnarray}
    \bra{\mathbf{n}} \hat{f}_p \ket{\mathbf{n}'} 
    &=&
    \sum_{i \in p} \alpha_{p\to i} \bra{n_i} \hat{k}_i \ket{n'_i} \nonumber \\
    && \times \prod_{j \in p \setminus i} \bra{n_j} \hat{x}_j \ket{n'_j} 
    \prod_{k \not\in p} \braket{n_k | n'_k} \\
    &\neq& 0, 
\end{eqnarray}
we derive that $n_k = n'_k$ must be satisfied for $k \not\in p$. 
Thus, if $\bra{\mathbf{n}} \hat{f}_p \ket{\mathbf{n}'} \neq 0$ for any $q \neq p$, then we obtain that
\begin{eqnarray}
    \bra{\mathbf{n}} \hat{f}_q \ket{\mathbf{n}'} = 0. \label{eq:relation}   
\end{eqnarray}
Then, the max norm is upper bounded by
\begin{eqnarray}
    \max_{\mathbf{n},\mathbf{n}'}|\bra{\mathbf{n}} \hat{H}_m \ket{\mathbf{n}'}|
    &=& \max_{\mathbf{n},\mathbf{n}', p \in \mathcal{S}} \left| \bra{\mathbf{n}} \hat{f}_{p} \ket{\mathbf{n}'} \right| \\
    &\le& 
    \max_{p \in \mathcal{S}} \sum_{i \in p} \left| \alpha_{p\to i} \right| \times \prod_{j \in p} \sqrt{\frac{m_j}{2}} \\
    &\le&
    \eta d 
    \left(
        \frac{m}{2}
    \right)^{\frac{d}{2}},
\end{eqnarray}
where $m_j = \max \{n_j, n'_j \}$ and $|n_j - n'_j| = 1$.

Third, we show the spectral norm of the Hamiltonian $\hat{H}_m$.
Let us consider a fixed $(l \le m)$-cite-excited state $\ket{\mathbf{n}}$ and denote $q_l \subset [N]$ as the index subset of the $l$ cites. 
For the state $\ket{\mathbf{n}}$ such that $n_i \neq 0, (i \in q_l)$ and $n_j = 0, (j \not\in q_l)$, we can derive that
\begin{eqnarray}
    \max_{\mathbf{n}'} \left| \bra{\mathbf{n}'} \hat{f}_{p} \ket{\mathbf{n}} \right| 
    &\le&
    \sum_{i \in p} \left| \alpha_{p\to i} \right| \times \prod_{j \in p} \sqrt{\frac{n_j + 1}{2}} \\
    &\le&
    \eta d \prod_{j \in p} \sqrt{\frac{n_j + 1}{2}} \\ 
    &\le&
    \eta d \left[ \sum_{j \in p} \frac{n_j+1}{2|p|} \right]^{\frac{|p|}{2}} \\
    &=&
    \eta d \left[ \frac{\bar{n}_{p,\mathbf{n}}+1}{2} \right]^{\frac{|p|}{2}},
\end{eqnarray}
where $\bar{n}_{p,\mathbf{n}} = \sum_{i \in p} n_i / |p|$.

From the above and Eq.~(\ref{eq:prop1}), for the $(l \le m)$-cite-excited state $\ket{\mathbf{n}}$, we derive that
\begin{eqnarray}
    \sum_{\mathbf{n'}} | \bra{\mathbf{n}'} \hat{H}_m \ket{\mathbf{n}} | 
    &=&
    \sum_{\mathbf{n}', p} | \bra{\mathbf{n}'} \hat{f}_p \ket{\mathbf{n}} | \label{eq:line_2} \\
    &=&
    \sum_{\mathbf{n'}} \sum_{p:p \cap q_l \neq \emptyset} |\bra{\mathbf{n}'} \hat{f}_p \ket{\mathbf{n}}| \\
    &\le& 
    \eta d \sum_{p:p \cap q_l \neq \emptyset} \left[ \frac{\bar{n}_{p,\mathbf{n}}+1}{2} \right]^{\frac{|p|}{2}} \\
    &=& 
    \eta d \sum_{p:p \cap q_l \neq \emptyset} m^{\frac{|p|}{2}} \left[ \frac{\bar{n}_{p,\mathbf{n}}+1}{2m} \right]^{\frac{|p|}{2}} \nonumber \\ \\
    &\le& 
    \eta d m^{\frac{d}{2}} \sum_{p:p \cap q_l \neq \emptyset} \frac{\bar{n}_{p,\mathbf{n}}+1}{2m} \\
    &\le& 
    \eta cd m^{\frac{d}{2}}, \label{eq:last_line}
\end{eqnarray}
where we used Eq.~(\ref{eq:relation}) to derive Eq.~(\ref{eq:line_2}) and we used that the number of index subsets containing an index $i$ is at most $c$ to derive Eq.~(\ref{eq:last_line}).
Therefore,
\begin{eqnarray}
    || \hat{H}_m ||_1 
    = \max_{\mathbf{n}} \sum_{\mathbf{n}'} | \bra{\mathbf{n}'} H \ket{\mathbf{n}} | \le \eta cd m^{\frac{d}{2}}. \label{eq:1_norm_H}
\end{eqnarray}
Since $\hat{H}_m$ is a hermitian, $|| \hat{H}_m ||_{\infty}$ is also upper bounded by Eq.~(\ref{eq:1_norm_H}).
From the relation $|| H ||_2 \le \sqrt{|| H ||_1 || H ||_{\infty}}$, the claim follows.


\section{Proof of Theorem \ref{thm:main}} \label{apdx:B}

In preparation for proving Theorem~\ref{thm:main}, we prove the following two lemmas.

\begin{lemma} \label{lem:6}
    Let us consider a system of $N$-dimensional ODEs in Def.~\ref{def:quantum_solvable_ODE} and the mapped Hamiltonian
    \begin{eqnarray}
        \hat{H} &=& \hat{H}^{(0)} + \hat{H}^{(I)} \\
        &=& \sum_{\substack{i \in p, p \in \mathcal{S} \\ : |p|=2}} \alpha _{p \to i} \hat{k}_i \prod_{j \in p\setminus i} \hat{x}_j \nonumber \\
        && + \sum_{\substack{i \in p, p \in \mathcal{S} \\ : |p| > 2}} \alpha _{p \to i} \hat{k}_i \prod_{j \in p\setminus i} \hat{x}_j, i \in [N].
    \end{eqnarray}
    Then, for arbitrary quantum states $\ket{\phi_l} \in \mathcal{H}_l$ and $\ket{\phi_{l'}} \in \mathcal{H}_{l'}$ respectively, 
    \begin{eqnarray}
        | \bra{\phi_l} \hat{H}(t_n) \cdots \hat{H}(t_1) \ket{\phi_{l'}} | =
        \begin{cases}
            0 & (n < n_0), \\
            \le \lVert \hat{H}^{(I)} \rVert_2^n & (n \ge n_0),
        \end{cases} \nonumber \\ 
    \end{eqnarray}
    where $n_0 = \lceil |l-l'| / d \rceil$ and $\hat{H}(t) = e^{i \hat{H}^{(0)} t} \hat{H}^{(I)} e^{-i \hat{H}^{(0)} t}$.
\end{lemma}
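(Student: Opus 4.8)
The plan is to grade the (truncated) Fock space by the total occupation number $\hat N = \sum_i \hat a_i^\dagger \hat a_i$ and to show that the free part $\hat H^{(0)}$ respects this grading while each interaction-picture factor shifts it by at most $d$. The crucial first step is to establish that $\hat H^{(0)}$ conserves $\hat N$. For a pair $p = \{i,j\}$ with $|p|=2$, Condition~3 in Def.~\ref{def:quantum_solvable_ODE}, i.e. $\alpha_{p\to j} = -\alpha_{p\to i}$, forces the contribution to $\hat H^{(0)}$ into the antisymmetric combination $\alpha_{p\to i}(\hat k_i \hat x_j - \hat k_j \hat x_i)$. Substituting $\hat x = (\hat a + \hat a^\dagger)/\sqrt 2$ and $\hat k = i(\hat a^\dagger - \hat a)/\sqrt 2$ and using that operators on distinct sites commute, the number-non-conserving pieces $\hat a_i^\dagger \hat a_j^\dagger$ and $\hat a_i \hat a_j$ cancel, leaving $\hat k_i \hat x_j - \hat k_j \hat x_i = i(\hat a_i^\dagger \hat a_j - \hat a_j^\dagger \hat a_i)$, which commutes with $\hat N$. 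Hence $[\hat H^{(0)}, \hat N] = 0$, so $e^{\pm i \hat H^{(0)} t}$ commutes with every projector $P_l$ onto the sector of definite total occupation $l$.

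Next I would bound the reach of $\hat H^{(I)}$. Each term $\hat k_i \prod_{j \in p\setminus i}\hat x_j$ is a product of $|p| \le d$ factors, each a sum of one raising and one lowering operator and hence shifting $\hat N$ by $\pm 1$; the whole term therefore shifts $\hat N$ by at most $|p| \le d$. Thus $P_l \hat H^{(I)} P_{l'} = 0$ whenever $|l-l'| > d$, and since the dressing unitaries commute with $P_l$ by the first step, $P_l \hat H(t) P_{l'} = e^{i\hat H^{(0)}t}(P_l \hat H^{(I)} P_{l'}) e^{-i\hat H^{(0)}t} = 0$ whenever $|l-l'| > d$. In words, each interaction-picture factor moves between occupation sectors by at most $d$.

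For the vanishing case I would insert a resolution of the identity $\sum_l P_l$ between consecutive factors and telescope: a nonzero contribution requires a chain of sectors from $l'$ to $l$ whose consecutive hops are each at most $d$, so $|l-l'| \le nd$. If $n < n_0 = \lceil |l-l'|/d\rceil$ then $nd < |l-l'|$, no such chain exists, and the amplitude is exactly zero. For $n \ge n_0$ I would apply Cauchy--Schwarz together with submultiplicativity of the spectral norm and its unitary invariance $\|\hat H(t_k)\|_2 = \|e^{i\hat H^{(0)}t_k}\hat H^{(I)} e^{-i\hat H^{(0)}t_k}\|_2 = \|\hat H^{(I)}\|_2$, giving $|\bra{\phi_l}\hat H(t_n)\cdots\hat H(t_1)\ket{\phi_{l'}}| \le \prod_{k=1}^n \|\hat H(t_k)\|_2 = \|\hat H^{(I)}\|_2^n$ for unit vectors $\ket{\phi_l},\ket{\phi_{l'}}$.

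The main obstacle is the first step: a generic quadratic $\hat k \hat x$ term does not conserve particle number, so the entire argument hinges on recognizing that Condition~3 forces exactly the antisymmetric hopping combination for which the anomalous $\hat a_i \hat a_j$ and $\hat a_i^\dagger \hat a_j^\dagger$ terms cancel. Once this grading is established the remaining steps are bookkeeping. A minor point I would flag explicitly is that $\ket{\phi_l}$ is taken to have definite total occupation number $l$ (equivalently $P_l \ket{\phi_l} = \ket{\phi_l}$), which is what makes the bound phrased in terms of $|l-l'|$ meaningful.
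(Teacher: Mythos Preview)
Your proposal is correct and follows essentially the same approach as the paper: both hinge on the observation that Condition~3 forces the $|p|=2$ terms into the antisymmetric combination $\hat k_i\hat x_j-\hat k_j\hat x_i = i(\hat a_i^\dagger\hat a_j-\hat a_j^\dagger\hat a_i)$, which commutes with $\hat N$, and then use that each interaction-picture factor shifts the total occupation by at most $d$. The only cosmetic difference is that the paper packages the $n\ge n_0$ bound as an induction on $n$ (normalizing $\hat H(t_{k+1})\ket{\phi_l}$ at each step), whereas you invoke Cauchy--Schwarz plus submultiplicativity and unitary invariance of $\|\cdot\|_2$ directly; your flag about $\ket{\phi_l}$ needing definite total occupation $l$ is also apt.
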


\begin{proof}
    Note that $\hat{H}^{(0)}$ conserves the total occupation number from the definition of the proposed ODE in Def.~\ref{def:quantum_solvable_ODE}, which is checked by the fact that $\hat{H}^{(0)}$ consists of the sum of the operators
    $
        \alpha_{p\to i} (\hat{k}_i \otimes \hat{x}_j - \hat{x}_i \otimes \hat{k}_j) = i \alpha_{p\to i} (\hat{a}_i^{\dagger} \otimes \hat{a}_j - \hat{a}_i \otimes \hat{a}_j^{\dagger}).
    $
    Furthermore, since $\hat{H}^{(I)}$ consists of the sum of products of at most $d$ creation or annihilation operators, $| \bra{\phi_l} (\hat{H}^{(I)})^n \ket{\phi_{l'}} | = 0$ when $n < \lceil |l-l'| / d \rceil$.
    
    Then, let us show the claim by induction. For $n=1$, since $| \bra{\phi_l} \hat{H}(t_1) \ket{\phi_{l'}} | = | \bra{\phi_l(t_1)} \hat{H}^{(I)} \ket{\phi_{l'}(t_1)} |$, the claim clearly follows. 
    Assume that the claim is satisfied for $n=k$. 
    Considering the following state
    \begin{widetext}
    \begin{eqnarray}
        \hat{H}(t_{k+1}) \ket{\phi_l} &=& \sqrt{|\bra{\phi_l(t_{k+1})} (\hat{H}^{(I)})^2 \ket{\phi_l(t_{k+1})}|} \frac{\hat{H}(t_{k+1}) \ket{\phi_l}}{\sqrt{|\bra{\phi_l(t_{k+1})} (\hat{H}^{(I)})^2 \ket{\phi_l(t_{k+1})}|}} \\
        &=:& \sqrt{|\bra{\phi_l(t_{k+1})} (\hat{H}^{(I)})^2 \ket{\phi_l(t_{k+1})}|} \ket{\phi_{l-d}^{l+d}(t_{k+1})},
    \end{eqnarray}
    \end{widetext}
    $\ket{\phi_{l-d}^{l+d}(t_{k+1})}$ is a quantum state in the Hilbert space restricted to the subspace of the total occupation numbers from $l-d$ to $l+d$. Therefore, since $\sqrt{|\bra{\phi_l(t_{k+1})} (\hat{H}^{(I)})^2 \ket{\phi_l(t_{k+1})}|} \le \| \hat{H}^{(I)} \|_2$, the claim follows.
    
\end{proof}

\begin{lemma} \label{lem:7}
    Let us consider a system of $N$-dimensional ODEs in the interval $[0,T]$ in Def.~\ref{def:quantum_solvable_ODE} and define the rescaled mapped Hamiltonian
    \begin{eqnarray}
        \hat{H} &=& \hat{H}^{(0)} + \hat{H}^{(I)} \\
        &=& \sum_{\substack{i \in p, p \in \mathcal{S} \\ : |p|=2}} \alpha _{p \to i} \hat{k}_i \prod_{j \in p\setminus i} \hat{x}_j \nonumber \\
        && + 
        \sum_{\substack{i \in p, p \in \mathcal{S} \\ : |p| > 2}} \frac{\alpha _{p \to i}}{\Delta^{|p|-2}} \hat{k}_i \prod_{j \in p\setminus i} \hat{x}_j, i \in [N],
    \end{eqnarray}
    where $\Delta = C T m^d$, $C = 2\eta^2 cd^3 2^{-d/2}$, and $\eta = \max_{i \in p, p \in \mathcal{S}} |\alpha_{p\to i}|$.
    Then, for the Hamiltonian $\hat{H}_m = \hat{H}^{(0)}_m + \hat{H}^{(I)}_m$ on $\mathcal{H}_{m}$, arbitrary states $\ket{\phi_l} \in \mathcal{H}_l$ and $\ket{\phi_{l'}} \in \mathcal{H}_{l'}$ respectively, for $t \in [0, \lVert \hat{H}_m \rVert_{\max} T]$,
    \begin{eqnarray}
        | \bra{\phi_l} e^{i \hat{H}_m t} \ket{\phi_{l'}} | \le \frac{2}{n_0!} \left( \frac{1}{2d} \right)^{n_0},
    \end{eqnarray}
    where $n_0 = \lceil |l-l'| / d \rceil$.
\end{lemma}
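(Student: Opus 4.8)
The plan is to pass to the interaction picture generated by $\hat{H}^{(0)}_m$ and expand $e^{i\hat{H}_m t}$ as a Dyson series, exactly in the spirit of the Lieb--Robinson argument of Ref.~\cite{Mizuta2023optimalhamiltonian}. Writing $e^{i\hat{H}_m t} = e^{i\hat{H}^{(0)}_m t}\, W(t)$ with $W(t) = e^{-i\hat{H}^{(0)}_m t} e^{i\hat{H}_m t}$, one checks that $W$ solves $\dot{W} = i\,\hat{H}_{\mathrm{int}}(t) W$ with $\hat{H}_{\mathrm{int}}(t) = e^{-i\hat{H}^{(0)}_m t}\hat{H}^{(I)}_m e^{i\hat{H}^{(0)}_m t}$, so that
\[
W(t) = \sum_{n=0}^{\infty} i^n \int_{0\le t_1\le\cdots\le t_n\le t} \hat{H}_{\mathrm{int}}(t_n)\cdots\hat{H}_{\mathrm{int}}(t_1)\,dt_1\cdots dt_n.
\]
Since $\hat{H}^{(0)}_m$ preserves the total occupation number (as used in Lemma~\ref{lem:6}), the bra $\bra{\phi_l}e^{i\hat{H}^{(0)}_m t}$ is again a unit vector supported in $\mathcal{H}_l$; call it $\bra{\tilde\phi_l}$. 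Hence $\bra{\phi_l} e^{i\hat{H}_m t} \ket{\phi_{l'}} = \bra{\tilde\phi_l} W(t) \ket{\phi_{l'}}$, which I will bound term by term.

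Next I would invoke Lemma~\ref{lem:6}, whose occupation-number argument is insensitive to the overall sign of $t$ in $\hat{H}_{\mathrm{int}}$ and applies verbatim on the truncated space $\mathcal{H}_m$ (with $\lVert\hat{H}^{(I)}_m\rVert_2$ in place of $\lVert\hat{H}^{(I)}\rVert_2$): each integrand $\bra{\tilde\phi_l}\hat{H}_{\mathrm{int}}(t_n)\cdots\hat{H}_{\mathrm{int}}(t_1)\ket{\phi_{l'}}$ vanishes for $n<n_0$ and is bounded by $\lVert\hat{H}^{(I)}_m\rVert_2^{\,n}$ for $n\ge n_0$. Because the time-ordered simplex has volume $t^n/n!$, the triangle inequality gives
\[
\left| \bra{\phi_l} e^{i\hat{H}_m t} \ket{\phi_{l'}} \right| \le \sum_{n\ge n_0} \frac{\lambda^n}{n!}, \qquad \lambda := \lVert\hat{H}^{(I)}_m\rVert_2\, t .
\]
The proof then reduces to (i) bounding $\lambda$ and (ii) estimating this exponential tail.

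For (i) I would reuse the spectral-norm computation of Lemma~\ref{lem:1}, applied only to the $|p|>2$ block and with the rescaled couplings: since every such term carries a factor $\Delta^{-(|p|-2)}\le\Delta^{-1}$, the identical arithmetic--geometric-mean/row-sum argument yields $\lVert\hat{H}^{(I)}_m\rVert_2 \le \eta c d\, m^{d/2}/\Delta$. Combining with $\lVert\hat{H}_m\rVert_{\max} \le \eta d (m/2)^{d/2}$ from Lemma~\ref{lem:1} and $t\le\lVert\hat{H}_m\rVert_{\max}T$ gives $\lambda \le \eta^2 c d^2 m^d T/(2^{d/2}\Delta)$. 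The constant $C = 2\eta^2 c d^3 2^{-d/2}$ in $\Delta = C T m^d$ is engineered precisely so that this collapses to $\lambda \le 1/(2d)$; tracking these constants consistently so that the chosen $\Delta$ indeed forces $\lambda\le 1/(2d)$ is the one place demanding care, and I expect it to be the main (though purely technical) obstacle.

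For (ii), the case $l=l'$ has $n_0=0$ and is trivial since the amplitude is at most $1\le 2$. For $n_0\ge 1$, the estimate $\lambda\le 1/(2d)\le 1/2\le(n_0+1)/2$ lets me factor out the leading term: using $\lambda^{n_0+k}/(n_0+k)! \le (\lambda^{n_0}/n_0!)\,(\lambda/(n_0+1))^k$ and summing the geometric series of ratio $\lambda/(n_0+1)\le 1/2$ gives $\sum_{n\ge n_0}\lambda^n/n! \le 2\lambda^{n_0}/n_0!$. Finally $\lambda\le 1/(2d)$ yields $2\lambda^{n_0}/n_0! \le (2/n_0!)(1/2d)^{n_0}$, which is the claimed bound. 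Everything beyond the constant bookkeeping in step (i) follows from Lemmas~\ref{lem:1} and~\ref{lem:6} together with the standard Dyson-series/simplex estimate.
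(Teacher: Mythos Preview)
Your proposal is correct and follows essentially the same route as the paper: pass to the interaction picture generated by $\hat{H}^{(0)}_m$, expand in a Dyson series, kill the first $n_0$ terms and bound the rest via Lemma~\ref{lem:6}, feed in the spectral-norm estimate of Lemma~\ref{lem:1} restricted to the $|p|>2$ block, and then verify that the chosen $\Delta$ forces $\lambda\le 1/(2d)$ so that the exponential tail collapses to $2\lambda^{n_0}/n_0!$. The only cosmetic difference is that the paper quotes the tail inequality $\sum_{n\ge n_0}x^n/n!\le 2x^{n_0}/n_0!$ (for $n_0\ge 2x$) from Ref.~\cite{Mizuta2023optimalhamiltonian}, whereas you reprove it via the geometric-series bound; your constant-tracking in step~(i) matches the paper's line by line.
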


\begin{proof}
    We follow the proof technique in Ref.~\cite{Mizuta2023optimalhamiltonian}.
    We first employ the interaction picture. 
    Defining the unitary operator
    \begin{eqnarray}
        \mathcal{U}_0(t) &=& e^{-i \hat{H}^{(0)}_m t}, 
    \end{eqnarray}
    the time evolution operator $e^{-i \hat{H}_m t}$ is represented as $e^{-i \hat{H}_m t} = \mathcal{U}_0(t) \ \mathcal{U}_I(t)$, where 
    \begin{eqnarray}
        \mathcal{U}_I(t) &=& \mathcal{T}\exp \left( -i \int_0^t dt' \hat{H}_I(t') \right), \\
        \hat{H}_I(t) &=& \mathcal{U}^{\dagger}_0(t) \hat{H}^{(I)}_m \mathcal{U}_0(t).
    \end{eqnarray}
    Then, using the Dyson series expansion, we derive that
    \begin{widetext}
    \begin{eqnarray}
        | \bra{\phi_l} e^{i \hat{H}_m t} \ket{\phi_{l'}} | &=& | \bra{\phi_l} \mathcal{U}_0(t) \mathcal{U}_I(t)  \ket{\phi_{l'}} | = | \bra{\phi_l(t)} \mathcal{U}_I(t)  \ket{\phi_{l'}} | \\
        &\le& \sum_{n=0}^{\infty} \int_0^t dt_n \cdots \int_0^{t_2} dt_1 | \bra{\phi_l(t)} \hat{H}_I(t_n) \cdots \hat{H}_I(t_1) \ket{\phi_{l'}}| \\
        &=& \sum_{n=n_0}^{\infty} \int_0^t dt_n \cdots \int_0^{t_2} dt_1 | \bra{\phi_l(t)} \hat{H}_I(t_n) \cdots \hat{H}_I(t_1) \ket{\phi_{l'}}| \label{eq:A_1} \\
        &\le& 
        \sum_{n=n_0}^{\infty} \frac{t^n}{n!} \| \hat{H}^{(I)}_m \|^n_2 \label{eq:A_2} \\
        &\le& \sum_{n=n_0}^{\infty} \frac{1}{n!} \left( \gamma cd m^{d/2} t \right)^n \label{eq:A_3} \\
        &\le& \frac{2}{n_0!} \left( \frac{1}{2d} \right)^{n_0},
    \end{eqnarray}
    \end{widetext}
    where $\gamma = \max_{i \in p, p \in \mathcal{S}:|p|>2} |\alpha_{p\to i} / \Delta^{|p|-2}|$.
    To derive Eq.~(\ref{eq:A_1}) and Eq.~(\ref{eq:A_2}), we used the result of Lemma~\ref{lem:6}. To derive Eq.~(\ref{eq:A_3}), we used the result of Lemma~\ref{lem:1}. To derive the last inequality, we used the following inequality \cite{Mizuta2023optimalhamiltonian};
    \begin{eqnarray}
        \sum_{n=n_0}^{\infty} \frac{x^n}{n!} \le 2 \frac{x^{n_0}}{n_0!}, \ \mathrm{if} \ n_0 \ge 2x \ge 0,
    \end{eqnarray}
    and the following relation
    \begin{eqnarray}
        2 \gamma cdm^{\frac{d}{2}}t &<& \frac{2 \eta cd m^{\frac{d}{2}} \lVert \hat{H}_m \rVert_{\max}T}{\Delta} \\
        &<& \frac{2 \eta^2 c d^2 2^{-d/2} T m^d }{\Delta} \\
        &=& \frac{1}{d} \le \frac{|l-l'|}{d}.
    \end{eqnarray}
\end{proof}

    Then, we prove Theorem~\ref{thm:main}.
    Basically, we follow the proof given in Ref.~\cite{Engel2021-vl}.
    For the original system of ODEs in Def.~\ref{def:quantum_solvable_ODE}, let us consider the rescaled system of ODEs and divide it into a linear part and a non-linear part as
    \begin{eqnarray}
        \frac{dx_i}{dt} = F^{(0)}_i(\textbf{x}) + F^{(I)}_i(\textbf{x}), i \in [N], \label{eq:thm1_ode}
    \end{eqnarray}
    where 
    \begin{eqnarray}
        F^{(0)}_i(\textbf{x}) 
        &=&
        \sum_{\substack{p \in \mathcal{S} \\ : i \in p, |p|=2}} 
        \alpha _{p \to i} \prod_{j \in p\setminus i} x_j, \label{eq:F1} \\
        F^{(I)}_i(\textbf{x}) 
        &=& 
        \sum_{\substack{p \in \mathcal{S} \\ : i \in p, |p|>2}} 
        \frac{\alpha _{p \to i}}{\Delta^{|p|-2}} \prod_{j \in p\setminus i} x_j,
    \end{eqnarray}
    and define that
    \begin{eqnarray}
        \gamma
        = 
        \max_{\substack{i \in p, p \in \mathcal{S} \\ : |p|>2}} 
        \left|\frac{\alpha _{p \to i}}{\Delta^{|p|-2}} \right|.
    \end{eqnarray}
    Similarly, we can derive the representation of the Hamiltonian and we denote that $\hat{H} = \hat{H}^{(1)} + \hat{H}^{(I)}$.

    First, for a quantum state $\ket{c} \in \mathcal{H}_{b}$, from Lemma~\ref{lem:7}, 
    \begin{eqnarray}
        |\braket{c | \psi_m(t)}| &\le& \left| \sum_{n=0}^{n_0-1} \frac{\bra{c}(-i \hat{H}_m t)^n \ket{\psi_m(0)}}{n!} \right| \nonumber \\
        && + \frac{2}{n_0!} \left( \frac{1}{2d} \right)^{n_0} \\
        &=:&  
        \left| \sum_{n=0}^{n_0-1} c_{m,n} \gamma^n \right| + \frac{2}{n_0!} \left( \frac{1}{2d} \right)^{n_0}, \label{eq:thm1_1}
    \end{eqnarray}
    where $n_0 = \lceil (m - b)/d \rceil$. Thus, the second term of Eq.~(\ref{eq:thm1_1}) can be ignored if a sufficiently large $m$ is taken.
    For instance, we find a sufficiently large $m$ to satisfy the following inequality;
    \begin{eqnarray}
        \frac{2}{n_0!} \left( \frac{1}{2d} \right)^{n_0} < \frac{\epsilon}{\Delta^b} \label{ineq:1}
    \end{eqnarray}
    where remember that $\Delta = CTm^d$, $C = 2\eta^2 c d^3 2^{-d/2}$, and $m = dn_0 + b$. Using the Stirling's formula, we derive that
    \begin{eqnarray}
        \left(n_0 + \frac{1}{2} \right) \log n_0 - n_0 + n_0 \log 2d + \log \sqrt{\pi/2} \nonumber \\
        > \log \epsilon^{-1} + b \log (CT) + bd \log (n_0(d+1)) \nonumber \\ \\
        \Rightarrow \left(n_0 - bd + \frac{1}{2} \right) \log n_0 + n_0 (\log 2d - 1) \nonumber \\
        > \log \epsilon^{-1} + b \log (CT) + \log \left( \frac{(d+1)^{bd}}{\sqrt{\pi/2}} \right). \nonumber \\
    \end{eqnarray}
    Then, we can derive one sufficient condition to satisfy Eq.~(\ref{ineq:1}),
    \begin{eqnarray}
        n_0 &=& \max \Biggl\{ \log \epsilon^{-1}, b\log (CT), \nonumber \\
        && \left. \log \left( \frac{(d+1)^{bd}}{\sqrt{\pi/2}} \right), 4bd - 2, e^4 \right\}.
    \end{eqnarray}

    Second, let us examine the exact output quantity.
    We can expand $\braket{c | \psi(t,\gamma)}$ as a power series in $\gamma$:
    \begin{eqnarray}
        \braket{c | \psi(t,\gamma)} &=& \sum_{n=0}^{\infty} \frac{\bra{c}(-i\hat{H}t)^n\ket{\psi(0)}}{n!} \\
        &=& \sum_{n=0}^{\infty} c_n(t) \gamma^n. \label{eq:thm1_2}
    \end{eqnarray}
    To show the convergence of Eq.~(\ref{eq:thm1_2}), consider an initial value problem given by Eq.~(\ref{eq:thm1_ode}) and ${\bf x}(0,\gamma) = {\bf x}_0$.
    Since Eq.~(\ref{eq:thm1_ode}) is infinitely differentiable in $\gamma$, the $\gamma = 0$ solution is ${\bf x}(t,0) = e^{{\bf F}^{(0)}t}{\bf x}_0$, and we consider the finite simulation time, then from the regular perturbation theorem~\cite{Hoppensteadt_undated-cc} the solution ${\bf x}(t,\gamma)$ can be expanded in Taylor expansion,
    \begin{eqnarray}
        {\bf x}(t,\gamma) = \sum_{n=0}^{n_0 - 1} {\bf x}_n(t)\gamma^n + O(\gamma^{n_0}) \ \mathrm{as} \ \gamma \to 0, \label{eq:thm1_3}
    \end{eqnarray}
    and $\gamma^{n_0}$ can be negligible if $m$ is sufficiently large.
    For instance, we find a sufficiently large $m$ to satisfy the following inequality;
    \begin{eqnarray}
        \gamma^{n_0} &<& \frac{\epsilon}{\Delta^b} \label{ineq:2} \\
        \Rightarrow \left( \frac{\eta}{\Delta} \right)^{n_0} &<& \frac{\epsilon}{\Delta^b} \\
        \Rightarrow n_0 \log (\Delta / \eta) &>& \log \epsilon^{-1} + b \log \Delta \\
        \Rightarrow (n_0 - b) \log (\Delta / \eta) &>& \log \epsilon^{-1} + b \log \eta.
    \end{eqnarray}
    Noting that
    \begin{eqnarray}
        \frac{\Delta}{\eta} &>& e^3 \\
        \Rightarrow n_0 &>& \frac{1}{d} \left( \frac{\eta e^3}{C} \right)^{\frac{1}{d}} - \frac{b}{d} \\ 
        &=& \frac{\sqrt{2}}{d} \left( \frac{e^3}{2\eta c d^3 T} \right)^{\frac{1}{d}} - \frac{b}{d},
    \end{eqnarray}
    we can derive one sufficient condition to satisfy Eq.~(\ref{ineq:2}),
    \begin{eqnarray}
        n_0 &&= \max \Biggl\{ \log \epsilon^{-1}, b \log \eta, 3b, \nonumber \\
        && \left. \frac{\sqrt{2}}{d} \left( \frac{e^3}{2\eta c d^3 T} \right)^{\frac{1}{d}} - \frac{b}{d} \right\}.
    \end{eqnarray}
    
    Remembering that $\braket{c | \psi(t,\gamma)}$ is a specified polynomial of ${\bf x}(t,\gamma)$ in the occupation number representation, we can apply Eq.~(\ref{eq:thm1_3}) to this polynomial to obtain
    \begin{eqnarray}
        \braket{c | \psi(t,\gamma)} = \sum_{n=0}^{n_0-1} c_n(t) \gamma^n + O(\gamma^{n_0}).
    \end{eqnarray}

    Finally, we show that $c_{m,n}(t) = c_n(t)$ for $n < n_0$ if $\hat{H}$ is derived from the ODEs consisting of degree $d-1$ polynomials and $m \ge d n_0 + b$ is satisfied.
    Since it takes $n_0 +1$ applications of $\hat{H}$ to couple from $\ket{c}$ to any component with $n > m$, all terms in
    \begin{eqnarray}
        \braket{c | \psi(t,\gamma)} = \bra{c} e^{-i\hat{H}t} \ket{\psi(0)}
    \end{eqnarray}
    that are affected by the truncation have at least $n_0 +1$ factors of $\gamma$, which implies the result.

\section{Initial state preparation from another access model}

The initial state preparation from a more basic access model is an interesting problem.
Here is a simple attempt. 
For $M = {}_{N+m}C_m$, let us assume an oracle that prepares a superposition state 
\begin{eqnarray}
    \ket{\phi} = \sum_{i = 0}^{M-1} c_i \ket{i}, \label{eq:index}
\end{eqnarray}
and consider the problem of converting the state into 
\begin{eqnarray}
    \ket{\phi_m} = \sum_{i = 0}^{M-1} c_i \ket{\mathbf{n}_i}, \label{eq:superpose_ex}
\end{eqnarray}
where $\mathbf{n}_i$ is the $i$-th $m$ integers taken from $N+1$ integers $\{ 0,1\cdots, N \}$, allowing for duplicates, and arranged in ascending order.
Then, there exists an efficient algorithm to exchange the indices $i$ and the ascending ordered combinations $\mathbf{n}_i$.

\begin{lemma} \label{lemma:i2comb}
    Let $M={}_{N+m}C_m$ be the total number of combinations that take $m$ integers, allowing for duplicates from $N+1$ integers $\{ 0,1,\cdots, N \}$, and the combination 
    \begin{eqnarray}
        \mathbf{m} = \underbrace{\overbrace{i_1 \cdots i_1}^{n_{i_1}}\overbrace{i_2 \cdots i_2}^{n_{i_2}} \cdots \overbrace{i_l \cdots i_l}^{n_{i_l}}}_{m}
    \end{eqnarray}
    of the integers be defined to be arranged in ascending order. For a given index $i \in \{ 0,\cdots, M-1 \}$, there exists an algorithm $f$ calculating the combination of integers corresponding to $i$ in $O(poly(m\log N))$ time complexity. The inverse algorithm is in the same way.
\end{lemma}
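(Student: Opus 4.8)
\emph{Proof plan.} The plan is to realize $f$ and its inverse as the classical \emph{unranking} and \emph{ranking} maps for combinations with repetition, implemented through the combinatorial number system.

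First I would reduce multisets to ordinary subsets. Writing the ascending combination $\mathbf{m}$ as a non-decreasing sequence $0 \le a_1 \le a_2 \le \cdots \le a_m \le N$, the substitution $b_k := a_k + (k-1)$ produces a strictly increasing sequence $0 \le b_1 < b_2 < \cdots < b_m \le N+m-1$, and this is a bijection between length-$m$ multisets over $\{0,\dots,N\}$ and $m$-element subsets of $\{0,\dots,N+m-1\}$. Since the latter ground set has $N+m$ elements, the number of such subsets is $\binom{N+m}{m} = M$, matching the index range. Recovering $\mathbf{m}$ from $(b_1,\dots,b_m)$ amounts to $a_k = b_k-(k-1)$ followed by run-length expansion into the padded form, clearly in $O(\mathrm{poly}(m\log N))$ time.

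Then I would invoke the combinatorial number system of degree $m$: every integer $i$ with $0 \le i < M$ has a unique representation
\begin{equation}
  i = \binom{b_m}{m} + \binom{b_{m-1}}{m-1} + \cdots + \binom{b_1}{1},
\end{equation}
with $N+m-1 \ge b_m > b_{m-1} > \cdots > b_1 \ge 0$, where the constraint $i<M$ forces $b_m \le N+m-1$ automatically. The unranking map $f$ recovers the digits greedily: for $k=m,m-1,\dots,1$, set $b_k$ to the largest integer with $\binom{b_k}{k}$ not exceeding the current remainder, subtract that binomial, and continue; then undo the shift. The ranking map $f^{-1}$ sets $b_k = a_k+(k-1)$ and returns the sum above. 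I would verify existence and uniqueness of the representation by the standard greedy argument, using the Pascal identity $\binom{b}{k} = \binom{b-1}{k}+\binom{b-1}{k-1}$ to show the greedy choice neither overshoots nor leaves a remainder too large for the remaining digits.

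For the complexity I would observe that every intermediate integer is bounded by $M$, whose bit length is $O(\log M) = O(m\log N)$. Each binomial $\binom{c}{k}$ with $c\le N+m$ and $k\le m$ is computable in $\mathrm{poly}(m,\log N)$ bit operations, and each greedy digit is found by a binary search over $b_k\in\{k-1,\dots,N+m-1\}$ costing $O(\log(N+m))$ binomial evaluations; with $m$ digits the whole procedure runs in $O(\mathrm{poly}(m\log N))$ time, and the inverse is a single length-$m$ sum of such binomials. The main obstacle I anticipate is not the algebra but the bit-complexity bookkeeping: the domain has exponential size $M$, so the integers being manipulated are exponentially large \emph{as numbers}, and I must keep track of the fact that their bit lengths—and hence the cost of each arithmetic operation—remain $O(m\log N)$, together with confirming that the combinatorial number system bijects exactly onto $\{0,\dots,M-1\}$ rather than a strictly larger range.
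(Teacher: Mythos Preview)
Your argument is correct, but it takes a somewhat different route from the paper. The paper works directly with multisets: to unrank $i$, it determines the left-most (smallest) value $a$ of the multiset by binary-searching for the $a$ with ${}_{N+m-a-1}C_m \le i < {}_{N+m-a}C_m$ (using that the multisets of size $m$ whose minimum is at least $a$ number ${}_{N+m-a}C_m$), subtracts the lower bound, and recurses on the remaining $m-1$ slots; ranking is the corresponding sum. You instead first pass to ordinary $m$-subsets of $\{0,\dots,N+m-1\}$ via the shift $b_k = a_k + (k-1)$ and then invoke the combinatorial number system, which processes from the largest element down. The two constructions produce \emph{different} bijections (the paper orders by the minimum element, your scheme by the maximum), but the lemma only asks for existence of some efficiently computable $f$, so either works. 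Your approach has the advantage of being a named, textbook construction with a clean uniqueness proof via Pascal's identity; the paper's approach avoids the auxiliary shift and keeps everything in the original multiset coordinates. The complexity accounting is essentially the same in both cases, and your remark that all intermediate integers have bit length $O(m\log N)$ is exactly the point the paper leaves implicit.
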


\begin{proof}
    The total number of combinations that take $m$ integers, allowing for duplicates, from $N+1$ integers $\{ 0,1,\cdots, N \}$ is $M = {}_{N+m}C_m$. 
    Suppose that the left-most integer is $a$, then the number of combinations of the remaining integers is ${}_{N+m-a-1}C_{m-1}$.
    Further, when the left-most integer is greater than or equal to $a$, the number of combinations of the remaining integers is
    \begin{eqnarray}
        {}_{N+m-a-1}C_{m-1} + \cdots + {}_{m-1}C_{m-1} = {}_{N+m-a}C_m.
    \end{eqnarray}
    Therefore, for a given index $i \in \{ 0,1,\cdots, M-1 \}$, the left-most integer $a$ can be determined by using binary search to find the $a$ such that ${}_{N+m-a-1}C_m \le i < {}_{N+m-a}C_m$, where denote that ${}_{m-1}C_m = 0$.
    The same method is then used for index $i - {}_{N+m-a-1}C_m$ to determine the left-second-most integer, which can be repeated to determine the other integers sequentially too.

    Conversely, for a given combination of integers, first, if the left-most integer $a$, then the index $i$ is at least larger than or equal to ${}_{N+m-a-1}C_m$. Second, if the left-second-most integer $b$, then the index $i$ is at least larger than ${}_{N+m-b-2}C_{m-1} + {}_{N+m-a-1}C_m$, which can be repeated to determine the index $i$.
\end{proof}

From Lemma \ref{lemma:i2comb}, under the assumption of quantum arithmetic, there exist two oracles such that  
\begin{eqnarray}
    O_f \ket{i} \otimes \ket{\mathbf{0}} &=& \ket{i} \otimes \ket{\mathbf{m}_i}, \label{oracle:f} \\
    O_{f^{-1}} \ket{i} \otimes \ket{\mathbf{m}_j} &=& \ket{i \oplus j} \otimes \ket{\mathbf{m}_j}. \label{oracle:inv_f}
\end{eqnarray}
From the above consideration, let us show the existence of algorithm transforming $\ket{\phi}$ to $\ket{\phi_m}$.

\begin{lemma}
    For a given state of Eq.~(\ref{eq:index}) through an oracle, under the assumption of quantum arithmetic,
    there exists an algorithm preparing $\ket{\phi_m}$ in Eq.~(\ref{eq:superpose_ex}) with the time complexity $O({\rm poly}(m\log N))$.
\end{lemma}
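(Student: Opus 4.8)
The plan is to realize the desired transformation by the standard ``compute-then-uncompute-the-index'' technique, built entirely from the two oracles $O_f$ and $O_{f^{-1}}$ furnished by Lemma~\ref{lemma:i2comb}. First I would adjoin an ancilla register initialized to $\ket{\mathbf{0}}$, so that the oracle-prepared state of Eq.~(\ref{eq:index}) becomes $\ket{\phi} \otimes \ket{\mathbf{0}} = \sum_{i=0}^{M-1} c_i \ket{i} \otimes \ket{\mathbf{0}}$. Applying $O_f$ as in Eq.~(\ref{oracle:f}) computes the ascending combination associated with each index into the ancilla, yielding $\sum_{i=0}^{M-1} c_i \ket{i} \otimes \ket{\mathbf{m}_i}$, where the index register and the combination register are now entangled with matched labels.

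The key step is the uncomputation of the index register. Since $\mathbf{m}_i = f(i)$ and $f$ is the bijection of Lemma~\ref{lemma:i2comb}, the combination stored in the second register encodes precisely the index $i$, so applying $O_{f^{-1}}$ of Eq.~(\ref{oracle:inv_f}) acts as $\sum_i c_i \ket{i \oplus i} \otimes \ket{\mathbf{m}_i} = \ket{0} \otimes \sum_i c_i \ket{\mathbf{m}_i}$. The index register is thereby disentangled and returned to $\ket{0}$; discarding it leaves exactly the state $\ket{\phi_m}$ of Eq.~(\ref{eq:superpose_ex}).

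For the complexity claim I would argue as follows. Both $O_f$ and $O_{f^{-1}}$ are reversible implementations of the classical algorithm of Lemma~\ref{lemma:i2comb}, whose cost is dominated by $m$ rounds of binary search over binomial coefficients and hence is $O({\rm poly}(m\log N))$. Under the quantum-arithmetic assumption each oracle is realized with $O({\rm poly}(m\log N))$ one- and two-qubit gates, and since $M = {}_{N+m}C_m \le (N+1)^m$ gives $\log M = O(m\log N)$, every register involved has width $O(m\log N)$ and the XOR in Eq.~(\ref{oracle:inv_f}) is well defined after padding. The algorithm uses a constant number of oracle calls, so the total cost is $O({\rm poly}(m\log N))$.

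The main obstacle is the bookkeeping that makes the uncomputation clean. The disentanglement in the second step relies entirely on the matched-index structure produced by $O_f$; and one must ensure that the internal ancillas consumed by the reversible arithmetic inside $O_f$ and $O_{f^{-1}}$ are themselves uncomputed, so that no garbage remains entangled with the data registers. This is exactly what is assumed in treating $O_f$ and $O_{f^{-1}}$ as clean (garbage-free) oracles, so beyond verifying register sizing for the XOR and the garbage-free reversibility of the arithmetic, no further work is required.
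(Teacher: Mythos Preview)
Your proposal is correct and matches the paper's proof exactly: apply $O_f$ to compute the combination into an ancilla and then $O_{f^{-1}}$ to uncompute the index register, obtaining $O_{f^{-1}}O_f\bigl(\sum_i c_i\ket{i}\otimes\ket{\mathbf{0}}\bigr)=\ket{0}\otimes\sum_i c_i\ket{\mathbf{m}_i}$. The paper states this in a single line, whereas you additionally spell out the complexity and clean-oracle bookkeeping, but the argument is identical.
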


\begin{proof}
    It is trivial from two oracles Eq.~(\ref{oracle:f}) and Eq.~(\ref{oracle:inv_f}), \textit{i.e.},
    \begin{eqnarray}
        O_{f^{-1}}O_f \left( \sum_{i = 0}^{M-1}  c_i \ket{i} \otimes \ket{\mathbf{0}} \right) = \ket{0} \otimes \sum_{i = 0}^{M-1}  c_i \ket{\mathbf{m}_i}.
    \end{eqnarray}
\end{proof}

\bibliography{qsnde_prr}

\end{document}